\documentclass{amsart}

\usepackage[T1]{fontenc}
\usepackage{latexsym,amsfonts}
\usepackage{tikz}
\usetikzlibrary{calc,shapes,arrows,automata}
\usepackage{amsmath, amssymb}
\usepackage{algorithm}
\usepackage{algorithmic}
\usepackage{url}
\usepackage{subcaption} 
\usepackage{caption}
\usepackage{centernot}
\usepackage{epsfig}
\usepackage{thmtools}
\usepackage{hyperref}
\usepackage{multirow}
\usepackage{textcomp}
\usepackage{supertabular}
\usepackage{preamble}

\begin{document}

\begin{abstract}
\label{sec:abstract}
The recently introduced notions of ranking functions and closure certificates utilize well-foundedness arguments to facilitate the verification of dynamical systems against $\omega$-regular properties.
A \emph{ranking function} and a \emph{closure certificate} are real-valued functions defined over states and state pairs of a dynamical system whose zero superlevel sets are inductive state invariant and inductive transition invariant, respectively.  
The search for such certificates can be automated by fixing a specific template class, such as a polynomial of a fixed degree, and then using optimization techniques such as sum-of-squares (SOS) programming  to find it. 
Unfortunately, such certificates may not be found for a fixed template.
In such a case, one must change the template; for example, increase the degree of the polynomial.
In this paper, we consider a notion of multiple functions in the form of vector certificates.
Taking inspiration from the literature on vector barrier certificates as generalizations of standard barrier certificates for safety verification, we propose vector co-B\"uchi ranking functions and vector closure certificates as nontrivial generalizations of ranking functions and closure certificates, respectively.
Both notions consist of a set of functions that jointly overapproximate an inductive invariant by considering each function to be a linear combination of the others.
The advantage of such certificates is that they allow us to prove properties even when a single function for a fixed template cannot be found using standard approaches. 
We present an SOS programming approach to search for these functions and demonstrate the effectiveness of our proposed method in verifying $\omega$-regular specifications in several case studies.
\end{abstract}

\title[Vector Certificates for $\omega$-regular Specifications]{Vector Certificates for $\omega$-regular Specifications}

\thanks{
This work was supported by NSF grants CNS-2145184 and CNS-2111688.}

\author[mohammed adib oumer]{Mohammed Adib Oumer} 
\author[vishnu murali]{Vishnu Murali} 
\author[majid zamani]{Majid Zamani}

\address{Department of Computer Science at the University of Colorado, Boulder, CO, USA.}
\email{\{mohammed.oumer,~vishnu.murali,~majid.zamani\}@colorado.edu}
\urladdr{https://www.hyconsys.com/members/moumer/}
\urladdr{https://www.hyconsys.com/members/vmurali/}
\urladdr{https://www.hyconsys.com/members/mzamani/}

\maketitle

\section{Introduction} 
\label{sec:intro}

An automated abstraction-free approach to verify systems against specifications characterized in linear temporal logic (LTL)~\cite{pnueli1977temporal} or via $\omega$-regular automata~\cite{vardi1994reasoning} is through the use of ranking functions~\cite{chatterjee2024sound} or closure certificates \cite{murali2024closure}, which act as inductive transition invariants.
Ranking functions overestimate reachable states and use well-foundedness arguments to verify $\omega$-regular specifications.
On the other hand, the closure certificates introduced in \cite{murali2024closure} aim to capture the reachable transitions of a system with the goal of automating the verification of LTL and $\omega$-regular specifications.
 The search for such certificates can be automated using optimization \cite{parrilo2003semidefinite,prajna2002sostools}, neural networks \cite{nadali2024closure}, or SMT-based approaches \cite{moura2011smt}. 
These approaches typically rely on first fixing a template (e.g., a polynomial of a fixed degree) and then making use of the above approaches to search for an appropriate function within this template. 
When one is unable to find such a certificate for a given template, one is forced to consider a different template and try again. 
Inspired by the use of a vector of multiple functions in vector barrier certificates \cite{sogokon2018vector} and its powerful potential for generalization, we present notions of \emph{vector co-B\"uchi ranking functions} and \emph{vector closure certificates}.
In this paper, we show that one may find such certificates to verify LTL properties, such as safety and persistence, even when we fail to find a single function of the same template to do so. 
Similarly to vector barrier certificates, the search for vector co-B\"uchi ranking functions and vector closure certificates can be carried out effectively using existing approaches.

Verification of systems against safety, a prominent LTL property, can be automated by using barrier certificates~\cite{prajna2004safety}.
A barrier certificate is a real valued function that is nonpositive over the initial states, positive over the unsafe states, and nonincreasing with transitions. 
Thus, a barrier certificate is an inductive state invariant that guarantees safety, as its zero level set separates the reachable and unsafe states, with the zero-sublevel set overapproximating the set of reachable states.
In~\cite{chatterjee2024sound}, the authors extended this safety verification approach via inductive invariants and unified it with well-foundedness arguments \cite{alpern1987recognizing} to certify general LTL properties.
In this method, they argue for well-foundedness by considering the existence of a single ranking function.
However, if one wants to consider partitions over relevant sets instead and try to find independent ranking function arguments over each partition, then one should consider a well-foundedness argument over the reachable transitions of the system.

Transition invariants~\cite{podelski2004transition} were introduced as an overapproximation of reachable transitions and are used to verify programs against $\omega$-regular properties. 
Closure certificates \cite{murali2024closure} act as functional analogs of inductive transition invariants. 
Such certificates can be used to verify safety, as well as larger classes of $\omega$-regular properties.
Intuitively, a closure certificate is a real valued function that is nonnegative for a pair of states $(x,y)$ if $y$ may be reachable from $x$ (conversely, if the function is negative, then $y$ is not reachable from $x$).
Ensuring that this certificate is negative for all pairs $(x_0, x_u)$, where $x_0$ is an initial state and $x_u$ is an unsafe state, acts as a proof of safety.
To verify general LTL specifications, one may extend the earlier findings with a ranking function argument to prove the well-foundedness.

In the context of dynamical systems, the typical formulation for ranking functions and closure certificates assumes the presence of a single function.
This formulation is sound, but if one fails to find a function that meets a given set of conditions, it does not imply that the system refutes the property of interest for the system.
An approach to addressing this challenge is to formulate the conditions using a systematic combination of multiple functions.
The authors of~\cite{sogokon2018vector} introduced the use of a vector of functions for barrier certificates in what they termed vector barrier certificates.
They show that such a formulation generalizes the standard formulation of barrier certificates, with guarantees that sometimes a scalar barrier certificate of a given template does not exist, while a vector barrier certificate does for a given system.
In addition, the class of inductive state invariants that one can extract from vector barrier certificates is richer, as its individual elements do not need to correspond to an inductive invariant. 
In this paper, we adopt the vector-based formulation for ranking functions and closure certificates and show its benefits in verifying $\omega$-regular properties, both in terms of the certificate template as well as computation time. 
The benefit of vector closure certificates over vector co-B\"uchi ranking functions can be summarized as follows.
As noted in~\cite{podelski2004transition}, a relation is well-founded if its transitive closure is disjunctively well-founded (not if the relation itself is disjunctively well-founded). 
Since we are using multiple functions to overapproximate the transitive closure of the transition relation via vector closure certificates, we can use the functions in the vector such that each component decreases independently of the others over different partitions of relevant sets to prove well-foundedness.
 This approach is not possible for vector co-B\"uchi ranking functions. 
In contrast, vector co-B\"uchi ranking functions are defined over states of a system, making their formulation simpler than vector closure certificates that are defined over pairs of states.

\subsection{Contributions}
In this paper, our main contributions are as follows:
i) We propose novel notions of vector co-B\"uchi ranking functions and vector closure certificates that have more permissive conditions compared to the standard formulations;
ii) We present SOS programming to search for the proposed vector certificates using a fixed template;
iii) We show that when a standard closure certificate of a given template fails to ensure a certain specification, one can find functions of the same template using the proposed vector closure certificates;
iv) We show that vector co-B\"uchi ranking functions are ideal for minimal computation times compared to other certificates;
v) We show that even when a standard closure certificate of a given template is successfully computed to ensure a given specification, one can find functions of a lower degree template using the proposed vector closure certificates while drastically improving computation time;
vi) We show that, for suitable choices of certain hyperparameters, our proposed formulations recover the usual conditions for ranking functions and closure certificates.

\subsection{Related Works}
The authors of~\cite{prajna2004safety} proposed the notion of barrier certificates as a discretization-free approach to provide guarantees of safety~\cite{prajna2007convex} for dynamical and hybrid systems. 
The authors of~\cite{kong2013exponential} introduced exponential-type barrier certificates that generalize the conditions of~\cite{prajna2004safety} while maintaining the convexity of the conditions simply by fixing a hyperparameter.
In~\cite{dai2017barrier}, general barrier certificates are reported, which further generalize the conditions for exponential-type barrier certificates. 
The authors of~\cite{sogokon2018vector} used a vector of functions to 
generalize the conditions of~\cite{kong2013exponential} and~\cite{dai2017barrier} for the verification of safety. 
Another generalization of standard barrier certificates using multiple functions was introduced in~\cite{oumer2024ibc} using the notion of logical interpolation.

The authors of \cite{chatterjee2024sound,dimitrova2014deductive} consider the use of barrier certificates and ranking functions to verify LTL properties.
The authors of~\cite{podelski2004transition} proposed a notion of transition invariants and demonstrated their use in verifying programs against $\omega$-regular properties. 
This concept was adopted in~\cite{murali2024closure} to introduce the notion of closure certificates for an automated approach to verify dynamical systems against $\omega$-regular properties. 

In terms of searching for vector certificates, it is not generally as straightforward as the standard certificates due to relatively more complex bilinear conditions (cf. Section~\ref{subsec:specifications} and~\ref{sec:vcs}).
One approach to addressing this bilinearity, albeit conservative, is to fix one of the terms contributing to the bilinear terms to reduce the problem to a convex one and use existing computational tools such as SOS programming or SMT solvers to search for the certificates. 
For the standard certificates, reducing the bilinear conditions to linear ones requires fixing only one constant.
In contrast, for vector certificates, we want to consider combinations of the functions; thus, reducing the bilinear conditions to linear ones requires fixing a matrix that is not trivially the zero matrix or the identity.
Another approach that directly tackles this challenge uses an iterative scheme of fixing one of the bilinear terms and searching for the other terms until an optimal solution is obtained or the process times out, as shown in~\cite{berger2024cone,wang2021synthesizing}. However, this approach generally requires a feasible solution a priori to start the iterative scheme. In our approach, our aim is to find a feasible solution to begin with.

\section{Preliminaries}
\label{sec:prelims}

\subsection{Notation}
\label{subsec:notation}
We use $\N$ and $\R$ to denote the sets of natural numbers and reals, respectively. 
For $m \in \R$, we use $\R_{\geq m}$ and $\R_{> m}$ to denote the intervals $[m, \infty)$ and $(m,\infty)$, respectively. 
Similarly, for any natural number $n \in \N$, we use $\N_{\geq n}$ to denote the set of natural numbers greater than or equal to $n$.
The $n$-dimensional Euclidean space is denoted by $\mathbb{R}^n$. 
$\1_n$ denotes a vector of $n$ ones.
We say that a matrix $A \in \R^{m\times n}$ is nonnegative if all its elements are nonnegative. 
We denote the set of nonnegative matrices using $\R_{\geq 0}^{m\times n}$.
Symbols $\forall$ and $\exists$ denote the universal and existential quantifiers, respectively.

Given a set $X$, $|X|$ denotes the cardinality of the set. 
Given a relation $R \subseteq X \times Y$ and an element $x \in X$, we use $R(x)$ to denote the set $\{ y \mid (x,y) \in R \}$. 
The set $X^{\omega}$ denotes the set of countably infinite sequences of elements in $X$.
We use the notation $(x_1, x_2, \ldots, x_n) \in X^{*}$ for finite length sequences and $(x_0, x_1, \ldots )\in X^{\omega}$ for $\omega$-sequences.
Let $\Inf(s)$ be the set of elements that occur infinitely often in the sequence $s  =(x_0, x_1, \ldots )$.
Given an infinite sequence $s = (x_0, x_1, \ldots )$ and two natural numbers $i,j \in \N$ where $i \leq j$, we use $s[i,j]$ to indicate the finite sequence $(x_i, x_{i+1}, \ldots, x_j)$, and $s[i, \infty)$ to indicate the infinite sequence $( x_i, x_{i+1}, \ldots )$. 
Finally, we use $s[i]$ to denote the $i^{th}$ element in the sequence $s$, \textit{i.e.}, we have $s[i] = x_i$ for any $i \in \N$.

\subsection{Discrete-time Dynamical System}
\label{subsec:system}
In this paper, we model systems as discrete-time dynamical systems.
\begin{definition}
\label{def:system}
    A discrete-time dynamical system is given by the tuple $\Sys = (\Xx, \Xx_0, f)$,
    where the state set is denoted by $\Xx$, the set of initial states is $\Xx_0 \subseteq \Xx$, and $f: \Xx \times \Xx$ is the state transition relation that describes the evolution of the states of the system. 
    That is, for $x_t$, the state of the system at time step $t\in \N$, the state of the system in the next time step is given by $x_{t+1} \in f(x_t)$, $\forall x_t \in \Xx.$
\end{definition}

If $\forall x \in \Xx$, we have $|f(x)| = 1$, then we consider the transition relation $f$ to be a deterministic state transition function.
We use $f$ for both a set-valued map when it is a relation and a transition function when it is a function.
Throughout the paper, we assume that the state sets of the systems under consideration are compact.

For notational convenience, given a state $x \in \Xx$, we use $x'$ to indicate a state in $f(x)$ (\textit{i.e.}, $x' \in f(x)$), unless otherwise specified. 
Furthermore, we use $x_0 \in \Xx_0$ to denote a state in the initial set of states $\Xx_0$.
A state sequence is an infinite sequence  $(x_0, x_1, \ldots) \in \Xx^{\omega}$ where $x_0 \in \Xx_0$, and $x_{i+1} \in f(x_i)\ \forall i \in \N$.
We say $x_j$ is reachable from $x_i$ if $i,j \in \N, j > i$ and $x_i, x_j \in (x_0, x_1, \ldots)$ for some state sequence $(x_0, x_1, \ldots)$.

\subsection{Specifications}
\label{subsec:specifications}
In this paper, we study increasingly complex specifications, from safety and persistence to LTL and $\omega$-regular specifications over discrete-time dynamical systems.

\subsubsection{Safety}
We say that a system $\Sys$ starting from some set of initial states $\Xx_0$ is safe with respect to a set of unsafe states $\Xx_u \subseteq \Xx$ if, for any state sequence $(x_0, x_1, \ldots)$, $x_i \notin \Xx_u\ \forall i \in \N_{\geq 1}$.
Remark that we assume that the initial and unsafe sets are disjoint, i.e. $\Xx_0 \cap \Xx_u = \emptyset$.
For notational simplicity, we use $x_u \in \Xx_u$ to denote a state in the unsafe set of states $\Xx_u$. 
\begin{definition}[BC]
\label{def:bc}
   A function $\Bb: \Xx \to \R$ is a BC for a system $\Sys$ with respect to a set of unsafe states $\Xx_u$ if $\exists \lambda \in \R_{\geq 0}$ such that $\forall x \in \Xx$, $\forall x_0 \in \Xx_0$, $\forall x_u \in \Xx_u$, $\forall x' \in f(x)$:
    \begin{align}
        \label{eq:bc_1}
        & \Bb(x_0) \leq 0, \\
        \label{eq:bc_2}
        & \Bb(x_u) > 0, \text{ and } \\
        \label{eq:bc_3}
        & \Bb(x') \leq \lambda \Bb(x).
    \end{align}
\end{definition}
\begin{theorem}[BCs imply safety~\cite{prajna2004safety}]
    For a system $\Sys = (\Xx, \Xx_0, f)$ with unsafe states $\Xx_u$, the existence of a barrier certificate $\Bb$ satisfying conditions~\eqref{eq:bc_1}-\eqref{eq:bc_3} implies its safety.
\end{theorem}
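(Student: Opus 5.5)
The plan is to argue by induction along an arbitrary state sequence that $\Bb$ stays nonpositive, and then use condition~\eqref{eq:bc_2} to separate the reached states from $\Xx_u$.

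Fix any state sequence $(x_0, x_1, \ldots)$ of $\Sys$, so $x_0 \in \Xx_0$ and $x_{i+1} \in f(x_i)$ for all $i \in \N$. We aim to prove the invariant $\Bb(x_i) \leq 0$ for every $i \in \N$. The base case $i=0$ is exactly condition~\eqref{eq:bc_1}.

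For the inductive step, assume $\Bb(x_i) \leq 0$. Since $x_{i+1} \in f(x_i)$, condition~\eqref{eq:bc_3} with $x = x_i$ and $x' = x_{i+1}$ gives $\Bb(x_{i+1}) \leq \lambda \Bb(x_i)$. Because $\lambda \in \R_{\geq 0}$ and $\Bb(x_i) \leq 0$, the product $\lambda \Bb(x_i)$ is nonpositive, so $\Bb(x_{i+1}) \leq 0$. This use of the sign of $\lambda$ is the only step needing care: with a negative $\lambda$ the invariant would fail, and the case $\lambda = 0$ is covered by the same inequality.

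It follows that every reachable state $x_i$ lies in the zero-sublevel set $\{x \mid \Bb(x) \leq 0\}$. By condition~\eqref{eq:bc_2}, $\Xx_u$ is contained in the strict superlevel set $\{x \mid \Bb(x) > 0\}$, and these two sets are disjoint. Hence $x_i \notin \Xx_u$ for all $i$, and in particular for all $i \in \N_{\geq 1}$. Since the state sequence was arbitrary, $\Sys$ is safe with respect to $\Xx_u$.

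There is no genuine obstacle in this argument. The one point to state explicitly is the sign argument in the inductive step, namely that nonnegativity of $\lambda$ preserves nonpositivity of $\Bb$.
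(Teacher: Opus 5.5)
Your proof is correct and takes the same approach the paper relies on. The paper does not prove this theorem itself but cites it from \cite{prajna2004safety}, noting only that conditions~\eqref{eq:bc_1} and~\eqref{eq:bc_3} give an inductive invariant (the zero-sublevel set of $\Bb$) containing the reachable states, while~\eqref{eq:bc_2} separates that invariant from $\Xx_u$; your induction using $\lambda \geq 0$, followed by the disjointness argument, spells out exactly this reasoning.
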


The definition of \emph{vector barrier certificates (VBCs)}, adopted from \cite{sogokon2018vector}, for discrete-time systems is given as follows.
\begin{definition}[VBC]
\label{def:vbc}
A vector of functions $\B$ consisting of $\Bb_i: \Xx \rightarrow \R$,  $\forall  i \in \{1,\ldots, k \}$, is a VBC for a system $\Sys$ with respect to a set of unsafe states $\Xx_u$ if $\exists A \in \R_{\geq 0}^{k\times k}$ such that $\forall x \in \Xx$, $\forall x_0 \in \Xx_0$, $\forall x_u \in \Xx_u$, $\forall x' \in f(x)$:
\begin{align}
    \label{eq:vbc_1}
    &\B (x_0) \leq 0, && 
    \\
    \label{eq:vbc_2}
    &\bigvee_{i = 1}^{k} \big(\Bb_i(x_u) > 0 \big),  
    \\
    \label{eq:vbc_3}
    &\B(x') \leq A\B(x),
\end{align}
where the inequalities are element-wise.
\end{definition}
\begin{theorem}[VBCs imply safety~\cite{sogokon2018vector}]
The existence of a VBC $\Bb_i: \Xx \rightarrow \R$, $\forall i \in \{1,\ldots,k \}$, satisfying conditions~\eqref{eq:vbc_1}-\eqref{eq:vbc_3} for a system $\Sys = (\Xx, \Xx_0, f)$ with unsafe states $\Xx_u$ implies its safety.
\label{thm:vbc}
\end{theorem}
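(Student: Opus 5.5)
The plan is to show by induction along an arbitrary state sequence that the vector $\B$ stays element-wise nonpositive, and then to contradict condition~\eqref{eq:vbc_2}. The essential ingredient is that a nonnegative matrix maps the nonpositive orthant into itself.

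Fix any state sequence $(x_0, x_1, \ldots)$ with $x_0 \in \Xx_0$ and $x_{t+1} \in f(x_t)$. We claim that $\B(x_t) \leq 0$ element-wise for every $t \in \N$.

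\emph{Base case.} For $t = 0$ the claim is exactly condition~\eqref{eq:vbc_1}.

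\emph{Inductive step.} Suppose $\B(x_t) \leq 0$. Applying condition~\eqref{eq:vbc_3} with $x = x_t$ and $x' = x_{t+1}$ gives $\B(x_{t+1}) \leq A\B(x_t)$. Component-wise this reads
\[
\Bb_i(x_{t+1}) \leq \sum_{j=1}^{k} a_{ij} \Bb_j(x_t).
\]
Each $a_{ij} \geq 0$ because $A \in \R_{\geq 0}^{k\times k}$, and each $\Bb_j(x_t) \leq 0$ by the inductive hypothesis. So every term $a_{ij}\Bb_j(x_t)$ is nonpositive, and hence $\Bb_i(x_{t+1}) \leq 0$ for all $i$. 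This is the only place where nonnegativity of $A$ is used. It is the crux of the argument: with a general matrix, the nonpositive orthant would not be invariant.

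Now suppose, for contradiction, that $x_t \in \Xx_u$ for some $t \geq 1$. Condition~\eqref{eq:vbc_2} gives an index $i$ with $\Bb_i(x_t) > 0$. This contradicts $\Bb_i(x_t) \leq 0$ from the induction. Hence no reachable state lies in $\Xx_u$, and $\Sys$ is safe.

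I expect no real obstacle. The only subtle point is making sure the element-wise inequality survives multiplication by $A$, and nonnegativity of $A$ ensures exactly that. As a remark, with $k = 1$ the theorem reduces to the scalar BC theorem with $\lambda = A$.
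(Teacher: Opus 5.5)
Your proof is correct and complete. Showing by induction that $\B(x_t) \leq 0$ holds along every state sequence (a nonnegative $A$ maps the nonpositive orthant into itself), and then contradicting \eqref{eq:vbc_2}, is the standard argument. The paper cites this result from \cite{sogokon2018vector} without proving it, but your induction is the same inductive-invariant reasoning the paper uses for its own vector certificates, e.g.\ in the proof of Theorem~\ref{thm:vcbrf}.
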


Note that conditions~\eqref{eq:bc_1} and~\eqref{eq:bc_3} (similarly~\eqref{eq:vbc_1} and~\eqref{eq:vbc_3}) provide the invariant set that overapproximates the reachable states, while condition~\eqref{eq:bc_2} (similarly~\eqref{eq:vbc_2}) separates the unsafe states from the reachable states.

Now, we discuss the existing formulations of the closure certificates.
The formulation of closure certificates (CCs)~\cite{murali2024closure} for safety is given below.
\begin{definition}[CC for Safety]
\label{def:cc_safe}
A  function $\Tt: \Xx \times \Xx \to \R$ is a CC for a system  $\Sys = (\Xx, \Xx_0, f)$ with respect to a set of unsafe states $\Xx_{u}$ if $\exists \eta \in \R_{ > 0}, \lambda \in \R_{ \geq 0}$ such that $\forall x, y \in \Xx$, $\forall x_0 \in \Xx_0$, $\forall x_u \in \Xx_u$, $\forall x' \in f(x)$:
\begin{align}
    \label{eq:cc_safe1} 
    & \Tt(x, x') \geq 0, \\
    \label{eq:cc_safe2}
    & \Tt(x, y) \geq \lambda \Tt(x', y), \text{ and } \\
    \label{eq:cc_safe3}
    & \Tt(x_0, x_u) \leq - \eta. 
\end{align}
\end{definition}
\begin{theorem}[CCs imply safety \cite{murali2024closure}]
    \label{thm:cc_safe}
   The existence of a function $\Tt: \Xx \times \Xx \to \R$ satisfying conditions~\eqref{eq:cc_safe1}-\eqref{eq:cc_safe3} for a system $\Sys = (\Xx, \Xx_0, f)$ with unsafe states $\Xx_{u}$ implies its safety.
\end{theorem}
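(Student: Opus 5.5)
The plan is a proof by contradiction. We show that the zero superlevel set of $\Tt$ contains every pair $(x_i, x_j)$ with $x_j$ reachable from $x_i$. Condition~\eqref{eq:cc_safe3} then rules out any unsafe state being reachable from an initial state.

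Suppose the system is not safe. Then there is a state sequence $(x_0, x_1, \ldots)$ with $x_0 \in \Xx_0$ and some index $n \in \N_{\geq 1}$ such that $x_n \in \Xx_u$. Fix this $n$. We prove by backward induction on $i \in \{n-1, n-2, \ldots, 0\}$ that $\Tt(x_i, x_n) \geq 0$.

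\emph{Base case} $i = n-1$. Since $x_n \in f(x_{n-1})$, condition~\eqref{eq:cc_safe1} applied with $x = x_{n-1}$ and $x' = x_n$ gives $\Tt(x_{n-1}, x_n) \geq 0$.

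\emph{Inductive step.} Suppose $\Tt(x_{i+1}, x_n) \geq 0$ for some $i$ with $0 \le i < n-1$. Apply condition~\eqref{eq:cc_safe2} with $x = x_i$, $x' = x_{i+1} \in f(x_i)$ and $y = x_n$. This gives $\Tt(x_i, x_n) \geq \lambda \Tt(x_{i+1}, x_n)$. Because $\lambda \in \R_{\geq 0}$ and $\Tt(x_{i+1}, x_n) \geq 0$, the right-hand side is nonnegative, so $\Tt(x_i, x_n) \geq 0$.

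Taking $i = 0$ yields $\Tt(x_0, x_n) \geq 0$. Condition~\eqref{eq:cc_safe3}, with $x_0 \in \Xx_0$ and $x_n \in \Xx_u$, gives $\Tt(x_0, x_n) \leq -\eta < 0$ since $\eta > 0$. This is a contradiction, so no state $x_i$ with $i \geq 1$ on any state sequence lies in $\Xx_u$, and the system is safe.

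There is no real obstacle here. The only points needing care are the following.
\begin{itemize}
\item The sign of $\lambda$ must be nonnegative. Otherwise multiplying the inductive hypothesis by $\lambda$ would not preserve nonnegativity.
\item The case $n = 1$ is covered directly by the base case.
\item The induction must be indexed so that the second argument $y = x_n$ stays fixed while the first argument steps backward along the trajectory.
\end{itemize}
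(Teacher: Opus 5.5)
Your proof is correct and takes essentially the same route as the standard argument. The paper imports this theorem from \cite{murali2024closure} without reproving it, but the step it invokes ("inductively") in the proof of Theorem~\ref{thm:vcc_safe} is exactly your backward propagation of $\Tt(\cdot, x_n) \geq 0$ via \eqref{eq:cc_safe1}--\eqref{eq:cc_safe2} with $\lambda \geq 0$, ending in a contradiction with \eqref{eq:cc_safe3}.
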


\subsubsection{Persistence}
We say that a system visits a region $\Xx_{VF} \subseteq \Xx$ only finitely often if, for any state sequence $( x_0, x_1, \ldots)$, $\exists i \in \N$ such that $\forall j \geq i$, $j \in \N$, we have $x_j \notin \Xx_{VF}$.
An approach to ensure persistence is the use of a Lyapunov-like argument \cite{podelski2006model} combined with an invariant argument via CCs.
\begin{definition}[CC for Persistence]
\label{def:cc_persistence}
    A function $\Tt: \Xx \times \Xx \to \R$ is a CC for $\Sys = (\Xx, \Xx_0, f)$ with a set of states $\Xx_{VF} \subseteq \Xx$ that must be visited only finitely often if $\exists \eta \in \R_{>0}, \lambda \in \R_{\geq 0}$ such that $\forall x,y \in \Xx$, $\forall x_0 \in \Xx_0$, $\forall x' \in f(x)$ and $\forall y', y'' \in \Xx_{VF}$: 
    \begin{align}
        \label{eq:cc_pers1}
        &\Tt(x, x') \geq 0, \\
        \label{eq:cc_pers2}
        &\Tt(x, y) \geq \lambda \Tt(x', y), \text{ and }\\
        \label{eq:cc_pers3}
        &\big( \Tt(x_0, y') \geq 0 \big) \wedge \big( \Tt(y',y'') \geq 0 \big) \implies \nonumber \\
        &\qquad \big( \Tt(x_0,y'') \leq \Tt(x_0,y') - \eta \big). 
    \end{align}
\end{definition}
\begin{theorem}[CCs imply Persistence \cite{murali2024closure}]
    \label{thm:cc_persistence}
    The existence of a function $\Tt: \Xx \times \Xx \to \R$ satisfying conditions~\eqref{eq:cc_pers1}-\eqref{eq:cc_pers3} for a system $\Sys = (\Xx, \Xx_0, f)$ implies that the state sequences of the system visit the set $\Xx_{VF}$ finitely often.
\end{theorem}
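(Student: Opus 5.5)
The plan is to argue by contradiction. First I show that $\Tt$ is nonnegative on every pair of states connected by a path of the system. Then I use condition~\eqref{eq:cc_pers3} to produce an unbounded decrease along the visits to $\Xx_{VF}$, which is incompatible with that nonnegativity.

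\textbf{Step 1 (transition invariance).} I claim that for every $x \in \Xx$ and every $y$ reachable from $x$ in $n \geq 1$ steps of $f$, we have $\Tt(x,y) \geq 0$. The proof is by induction on $n$.
\begin{itemize}
\item Base case $n=1$: here $y \in f(x)$, and condition~\eqref{eq:cc_pers1} gives the claim directly.
\item Inductive step: if $y$ is reached from $x$ in $n+1$ steps, the path passes through some $x' \in f(x)$ from which $y$ is reached in $n$ steps. By the induction hypothesis $\Tt(x',y) \geq 0$. Condition~\eqref{eq:cc_pers2} with $\lambda \geq 0$ then yields $\Tt(x,y) \geq \lambda \Tt(x',y) \geq 0$.
\end{itemize}
This step is the main place where care is needed. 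The induction must peel off the \emph{first} transition rather than the last, because condition~\eqref{eq:cc_pers2} relates $\Tt(x,\cdot)$ to $\Tt(x',\cdot)$ with the second argument fixed. The nonnegativity of $\lambda$ is exactly what allows the sign to propagate.

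\textbf{Step 2 (contradiction).} Suppose some state sequence $(x_0,x_1,\ldots)$ visits $\Xx_{VF}$ infinitely often, at times $1 \leq i_1 < i_2 < \cdots$; the possible visit at time $0$ can be discarded. By Step 1, every pair in the sequence that is ordered in time satisfies:
\begin{itemize}
\item $\Tt(x_0, x_{i_k}) \geq 0$ for all $k$, since $x_{i_k}$ is reachable from $x_0$;
\item $\Tt(x_{i_k}, x_{i_{k+1}}) \geq 0$ for all $k$, since $x_{i_{k+1}}$ is reachable from $x_{i_k}$.
\end{itemize}
Both visited states lie in $\Xx_{VF}$, so the hypotheses of condition~\eqref{eq:cc_pers3} hold with $y' = x_{i_k}$ and $y'' = x_{i_{k+1}}$. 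This gives $\Tt(x_0,x_{i_{k+1}}) \leq \Tt(x_0,x_{i_k}) - \eta$. Iterating,
\[
\Tt(x_0,x_{i_k}) \leq \Tt(x_0,x_{i_1}) - (k-1)\eta .
\]
Since $\eta > 0$, the right-hand side becomes negative for $k$ large enough. This contradicts $\Tt(x_0,x_{i_k}) \geq 0$.

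Hence every state sequence visits $\Xx_{VF}$ only finitely often. Neither compactness of $\Xx$ nor any boundedness of $\Tt$ is needed: the contradiction uses only that $\Tt(x_0,x_{i_1})$ is a single finite real number.
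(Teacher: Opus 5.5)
Your proof is correct and is essentially the argument the paper relies on. The paper only cites this result, but its proof of the vector analogue (Theorem~\ref{thm:vcc_pers}) runs the same two steps: nonnegativity of $\Tt$ on time-ordered pairs of a state sequence via \eqref{eq:cc_pers1}--\eqref{eq:cc_pers2}, then iterating \eqref{eq:cc_pers3} along the visits to $\Xx_{VF}$ to push $\Tt(x_0,\cdot)$ below zero. Your version is slightly cleaner: it omits the Ramsey-theorem step, which is unneeded because \eqref{eq:cc_pers3} applies to every pair in $\Xx_{VF}$, and it explicitly discards a possible visit at time $0$, where $\Tt(x_0,x_0)\geq 0$ is not guaranteed.
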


Observe that conditions~\eqref{eq:cc_pers1} and~\eqref{eq:cc_pers2} provide an invariant set that overapproximates the reachable transitions while condition~\eqref{eq:cc_pers3} argues for finite visits using possible transitions.
One can employ the S-procedure~\cite{yakubovich1971s} to convert the implication in~\eqref{eq:cc_pers3} into a single inequality condition.

We now mention the formulation of B\"uchi ranking functions (BRF) to argue for infinite visits (recurrence). 
\begin{definition}[BRF for Recurrence]
\label{def:brf} 
A function $\Bb$ is a BRF for a system $\Sys$ with respect to a set of states $\Xx_{INF} \subseteq \Xx$ that must be visited infinitely often if $\exists \eta \in \R_{> 0}$ and $\exists \lambda_1 \in \R_{\geq 0}$ such that $\forall x \in \Xx$, $\forall x_0 \in \Xx_0$, $\forall x' \in f(x)$, $\forall y \in \Xx \backslash \Xx_{INF}$, $\forall y' \in f(y)$:
\begin{align}
    \label{eq:brf_1}
    &\Bb(x_0) \geq 0, && \\
    \label{eq:brf_2}
    &\Bb(x') \geq \lambda_1 \Bb(x), \text{ and} \\
    \label{eq:brf_3}
    &(\Bb(y) \geq 0) \implies (\Bb(y) - \Bb(y') - \eta \geq 0),
\end{align}
where the inequalities are element-wise.
\end{definition}
\begin{theorem}[BRFs imply recurrence~\cite{chatterjee2024sound}]
\label{thm:brf}
The existence of a BRF $\Bb$, satisfying conditions~\eqref{eq:brf_1}-\eqref{eq:brf_3} for a system $\Sys = (\Xx, \Xx_0, f)$ implies that the state sequences of the system visit the set $\Xx_{INF}$ infinitely often.
\end{theorem}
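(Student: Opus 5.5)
We argue by contradiction, combining an inductive-invariant argument (from conditions~\eqref{eq:brf_1} and~\eqref{eq:brf_2}) with a well-foundedness argument (from condition~\eqref{eq:brf_3}). Fix an arbitrary state sequence $s=(x_0,x_1,\ldots)$ of $\Sys$. Suppose, contrary to the claim, that $s$ visits $\Xx_{INF}$ only finitely often. Then there is an index $i\in\N$ such that $x_j\in\Xx\backslash\Xx_{INF}$ for all $j\geq i$.

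\emph{Step 1 (invariance).} We first show by induction on $t$ that $\Bb(x_t)\geq 0$ for every $t\in\N$. The base case is condition~\eqref{eq:brf_1}, since $x_0\in\Xx_0$. For the inductive step, assume $\Bb(x_t)\geq 0$ and recall $x_{t+1}\in f(x_t)$. Condition~\eqref{eq:brf_2} then gives
\begin{align*}
\Bb(x_{t+1})\geq\lambda_1\Bb(x_t)\geq 0,
\end{align*}
where the last inequality uses $\lambda_1\geq 0$. This holds even when $\lambda_1=0$, so no case split is needed. This invariance step is the one point requiring care, because condition~\eqref{eq:brf_3} is only an implication guarded by $\Bb(y)\geq 0$, and it is exactly this invariance that lets us discharge the guard.

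\emph{Step 2 (strict decrease).} For each $j\geq i$ we have $x_j\in\Xx\backslash\Xx_{INF}$, $x_{j+1}\in f(x_j)$, and, by Step 1, $\Bb(x_j)\geq 0$. Condition~\eqref{eq:brf_3} therefore yields
\begin{align*}
\Bb(x_{j+1})\leq\Bb(x_j)-\eta .
\end{align*}
Iterating this inequality gives
\begin{align*}
\Bb(x_{i+n})\leq\Bb(x_i)-n\eta \quad \text{for all } n\in\N .
\end{align*}

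\emph{Step 3 (contradiction).} Since $\Bb(x_i)$ is a finite real number and $\eta>0$, we may choose $n\in\N$ with $n>\Bb(x_i)/\eta$. For this $n$, Step 2 gives $\Bb(x_{i+n})<0$, which contradicts Step 1. Hence $s$ visits $\Xx_{INF}$ infinitely often. Since $s$ was an arbitrary state sequence, every state sequence of $\Sys$ visits $\Xx_{INF}$ infinitely often.
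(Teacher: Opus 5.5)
Your proof is correct and is essentially the standard argument: conditions~\eqref{eq:brf_1}--\eqref{eq:brf_2} give $\Bb\geq 0$ along every state sequence, which discharges the guard in~\eqref{eq:brf_3} and forces an impossible unbounded decrease on any suffix that avoids $\Xx_{INF}$. The paper does not prove this statement itself (it cites \cite{chatterjee2024sound}), but its proof of the vector analogue, Theorem~\ref{thm:vcbrf}, has exactly this invariance-then-bounded-decrease structure; your explicit induction and choice of $n>\Bb(x_i)/\eta$ make rigorous the steps that proof leaves informal.
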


In this paper, we modify this result to argue for a finite visit to the set $\Xx_{VF}$. We replace condition~\eqref{eq:brf_3} with conditions $(\Bb(y) \geq 0) \implies (\Bb(y) - \Bb(y') \geq 0)$ and $(\Bb(z) \geq 0) \implies (\Bb(z) - \Bb(z') - \eta \geq 0)$ $\forall y \in \Xx \backslash \Xx_{VF}$, $\forall y' \in f(y)$, $\forall z \in \Xx_{VF}$, $\forall z' \in f(z)$. 
The key idea behind these conditions is to set the function $\Bb$ to be nonincreasing when outside $\Xx_{VF}$, but to make it strictly decreasing when outside $\Xx_{VF}$. 
Thus, if such $\Bb$ exists, $\Xx_{VF}$ must be visited only finitely often.
We explore a vector certificate version of this formulation in Section~\ref{sec:vcs}.

\subsubsection{Linear Temporal Logic (LTL)}
The formulae in LTL~\cite{pnueli1977temporal}  are defined with respect to a set of finite atomic propositions $AP$ that are relevant to our system. 
Let $\Sigma = 2^{AP}$ denote the powerset of atomic propositions.
A trace $w = ( w_0, w_1, \ldots, ) \in \Sigma^{\omega}$ is an infinite sequence of sets of atomic propositions.
The syntax of LTL can be given via the following grammar: 
\[
\phi := \top \;|\; a \;|\; \neg \phi \;|\; \mathsf{X} \phi  \;|\; \phi \mathsf{U} \phi ,
\]
where $\top$ indicates $\mathsf{true}$, $a \in AP$ denotes an atomic proposition, the symbols $\wedge$ and $\neg$ denote the logical AND and NOT operators, respectively.
The temporal operators next and until are denoted by $\mathsf{X}$ and $\mathsf{U}$, respectively.
The above operators are sufficient to derive the logical OR ($\vee$) and implication ($\implies$) operators, as well as the temporal operators release ($\mathsf{R}$), eventually ($\mathsf{F}$), and always ($\mathsf{G}$).

We inductively define the semantics of an LTL formula with respect to the trace $w$ as follows:
\begin{align}
&w \models a && \text{ if } a \in w[0] \\
& w \models \phi_1 \wedge \phi_2 && \text{ if } w \models \phi_1 \text{ and } w \models \phi_2 \\
& w \models \neg \phi && \text{ if } w \not\models \phi \\
& w \models \mathsf{X} \phi & & \text{ if } w[1, \infty) \models \phi\\
& w \models \phi_1 \mathsf{U} \phi_2 && \text{ if }\exists i \in \N \text{ such that } w[0,i] \models \phi_1 \nonumber \\ & && \text{ and } w[i+1, \infty) \models \phi_2
\end{align}
To reason about whether a system satisfies a property specified in LTL, we associate a labeling function $\Ll: \Xx \to \Sigma$ that maps each state of the system to a letter in the finite alphabet $\Sigma$. 
This naturally generalizes to mapping a state sequence of the system $( x_0, x_1, \ldots, ) \in \Xx^{\omega}$ to a trace $w = ( \Ll(x_0), \Ll(x_1), \ldots, ) \in \Sigma^{\omega}$. 
Let $TR(\Sys, \Ll)$ denote the set of all traces of $\Sys$ under the labeling map $\Ll$. 
Then the system $\Sys$ satisfies an LTL property $\phi$ under the labeling map $\Ll$ if, for all $w \in TR(\Sys, \Ll)$, we have $w \models \phi$.
We denote this as $\Sys \models_{\Ll} \phi$ and infer the labeling map from context.
Safety and persistence can be formulated as LTL formulae.

\vspace{0.3em}\noindent \textbf{Nondeterminstic B\"uchi Automata.}
A nondeterminstic B\"uchi automaton (NBA) $\Aa$ is a tuple $(\Sigma,\Qq, \Qq_0, \delta, \Qq_F)$, where:
$\Sigma$ is the alphabet, 
$\Qq$ is a finite set of states,
$\Qq_0 \subseteq \Qq$ is an initial set of states, 
$\delta : \Qq \times \Sigma \times \Qq$ is the transition relation and 
${\Qq_F} \subseteq \Qq$ is the set of accepting states.
A run of the automaton $\Aa = (\Sigma,\Qq, \Qq_0, \delta, \Qq_F)$ over a trace $w = ( \sigma_0, \sigma_1, \sigma_2 \ldots, ) \in \Sigma^{\omega}$ is an infinite sequence of states  characterized as $\rho = ( q_0,q_1, q_2, \ldots, ) \in \Qq^{\omega}$ with $q_0 \in \Qq_0$ and $q_{i+1} \in \delta(q_i, \sigma_i)$.
An NBA $\Aa$ is said to accept a trace $w$ if there exists a run $\rho$ over $w$ where $\Inf(\rho) \cap {\Qq_F} \neq \emptyset$.

It is well known that given an LTL formula $\phi$ on a set of atomic propositions $AP$, one can construct an NBA $\Aa$ such that $w \in \Ll(\Aa)$ if and only if $w \models \phi$~\cite{vardi2005automata}.
An automata-theoretic approach for checking whether $\Sys \models_{\Ll} \phi$ consists first in constructing the NBA $\Aa$ that captures $\neg \phi$, and then verifying that $\Sys \not\models_{\Ll} \neg \phi$ by demonstrating that no execution trace of the system is accepted by $\Aa$.

To verify whether a system satisfies an LTL property, we use a closure certificate on the product $\Sys \otimes \Aa$ as follows.
\begin{definition}[CC for LTL]
\label{def:cc_ltl}
   Consider a system $\Sys = (\Xx, \Xx_0, f)$ and an NBA $\Aa= (\Sigma, \Qq, \Qq_0, \delta, \Qq_F)$ representing the complement of an LTL formula $\phi$. 
   A function $\Tt: \Xx \times \Qq \times  \Xx \times \Qq \to \R$ is a CC for $\Sys \otimes \Aa$ if $\exists \eta \in \R_{> 0}, \lambda \in \R_{\geq 0}$ such that $\forall x, y, y' \in \Xx$, $\forall x_0 \in \Xx_0$, $\forall x' \in f(x)$, $\forall n, p \in \Qq$, $\forall n' \in \delta(n, \Ll(x))$, $\forall q_0 \in \Qq_0$ and $\forall r, r' \in \Qq_F$:
    \begin{align}
        \label{eq:cc_ltl1}
        & \Tt\big( (x, n), (x', n') \big) \geq 0, \\
        \label{eq:cc_ltl2} 
        & \Tt \big((x, n), (y, p) \big) \geq \lambda \Tt \big( (x', n'), (y, p) \big), \\
        \label{eq:cc_ltl3}
        & \Big( \Tt\big((x_0,q_0),(y,r) \big) \geq 0 \Big) \wedge \Big( \Tt( (y,r), (y',r') ) \geq 0 \Big) \implies \nonumber \\
        & \qquad \Big(\Tt \big( (x_0,q_0),(y',r') \big) \leq \Tt \big( (x_0,q_0),(y,r) \big) - \eta \Big).  
    \end{align}
\end{definition}
\begin{theorem}[CCs verify LTL~\cite{murali2024closure}]
\label{thm:cc_ltl}
   Consider a system $\Sys = (\Xx, \Xx_0, f)$ and an LTL formula $\phi$. Let NBA $\Aa$ represent the complement of the specification, \textit{i.e}, $\neg \phi$. The existence of a CC satisfying conditions~\eqref{eq:cc_ltl1}-\eqref{eq:cc_ltl3} implies that $\Sys \models_{\Ll} \phi$. 
\end{theorem}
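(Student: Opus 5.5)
We argue by contradiction. Suppose $\Sys \not\models_{\Ll} \phi$, so some state sequence $(x_0, x_1, \ldots)$ has a trace $w = (\Ll(x_0), \Ll(x_1), \ldots) \models \neg\phi$. Then $\Aa$ accepts $w$, so there is a run $(q_0, q_1, \ldots)$ with $q_0 \in \Qq_0$, $q_{i+1} \in \delta(q_i, \Ll(x_i))$, and $\Inf(q_0,q_1,\ldots) \cap \Qq_F \neq \emptyset$. The plan is to work with the product sequence $z_i = (x_i, q_i)$, whose consecutive pairs are exactly the pairs $((x,n),(x',n'))$ quantified in~\eqref{eq:cc_ltl1}--\eqref{eq:cc_ltl2}. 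The argument then has two parts: first show that $\Tt$ is nonnegative on every reachable pair, and then use~\eqref{eq:cc_ltl3} to get an infinite strictly decreasing chain of values that contradicts a lower bound.

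\textbf{Step 1: transitive-closure invariant.} We show by induction on $j - i \geq 1$ that $\Tt(z_i, z_j) \geq 0$ for all $i < j$.
\begin{itemize}
\item The base case $j = i+1$ is exactly~\eqref{eq:cc_ltl1}.
\item For the inductive step, take $(x,n) = z_i$, $(x',n') = z_{i+1}$ and $(y,p) = z_j$ in~\eqref{eq:cc_ltl2}. This gives $\Tt(z_i, z_j) \geq \lambda\, \Tt(z_{i+1}, z_j) \geq 0$, since $\lambda \geq 0$ and the induction hypothesis gives $\Tt(z_{i+1}, z_j) \geq 0$.
\end{itemize}
Hence $\Tt(z_0, z_j) \geq 0$ for every $j \geq 1$, and $\Tt(z_i, z_j) \geq 0$ for all $i<j$.

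\textbf{Step 2: infinite descent.} Let $j_1 < j_2 < \cdots$ be the infinitely many indices with $q_{j_k} \in \Qq_F$. We apply~\eqref{eq:cc_ltl3} with $(x_0, q_0) = z_0$, $(y,r) = z_{j_k}$ and $(y',r') = z_{j_{k+1}}$. Both premises hold by Step 1, so
\begin{align*}
\Tt(z_0, z_{j_{k+1}}) \leq \Tt(z_0, z_{j_k}) - \eta .
\end{align*}
Iterating gives $\Tt(z_0, z_{j_k}) \leq \Tt(z_0, z_{j_1}) - (k-1)\eta$. This tends to $-\infty$ because $\eta > 0$, contradicting $\Tt(z_0, z_{j_k}) \geq 0$ from Step 1. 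So no trace of $\Sys$ is accepted by $\Aa$, and therefore $\Sys \models_{\Ll} \phi$.

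\textbf{Main obstacle: index bookkeeping.}
\begin{itemize}
\item The automaton transition $n' \in \delta(n, \Ll(x))$ must line up with the run's indexing $q_{i+1} \in \delta(q_i, \Ll(x_i))$, so that each consecutive product pair $(z_i, z_{i+1})$ really is an admissible instance of~\eqref{eq:cc_ltl1}--\eqref{eq:cc_ltl2}.
\item If $j_1 = 0$, i.e. $q_0$ is itself accepting, we simply drop that index. Step 1 only guarantees nonnegativity for pairs with $i<j$, so we use only accepting indices $j_k \geq 1$.
\end{itemize}
Note that the contradiction needs only $\Tt(z_0, z_{j_k}) \geq 0$ together with the uniform decrement $\eta$. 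No compactness or boundedness of $\Tt$ is required.
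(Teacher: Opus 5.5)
Your proof is correct and takes essentially the paper's route. The paper only cites this theorem, but its proof of the vector analogue (Theorem~\ref{thm:vcc_ltl}, via Theorem~\ref{thm:vcc_pers}) likewise treats an accepting run as a persistence violation on $\Sys \otimes \Aa$, obtains nonnegativity on all reachable pairs from \eqref{eq:cc_ltl1}--\eqref{eq:cc_ltl2}, and derives an $\eta$-descent along accepting visits contradicting the lower bound $0$; your direct argument on the product sequence inlines that reduction and correctly needs no Ramsey-type step for a single function.
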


In the next section, we define and describe the utility of vector co-B\"uchi ranking functions and vector closure certificates.

\section{Vector Certificates for $\omega$-regular Specifications}
\label{sec:vcs}
We first mention the vector formulation of ranking functions dubbed vector co-B\"uchi ranking function (VCBRF) to argue for finite visits.
\begin{definition}[VCBRF for Persistence]
\label{def:vcbrf} 
A vector of functions $\B$ consisting of $\Bb_i: \Xx \rightarrow \R$,  $\forall  i \in \{1,\ldots, k \}$, is a VCBRF for a system $\Sys$ with respect to a set of states $\Xx_{VF} \subseteq \Xx$ that must be visited only finitely often if $\exists \eta \in \R_{>0}$ and $\exists A_1 \in \R_{\geq 0}^{k\times k}$ such that $\forall x \in \Xx$, $\forall x_0 \in \Xx_0$, $\forall x' \in f(x)$, $\forall y \in \Xx \backslash \Xx_{VF}$, $\forall y' \in f(y)$, $\forall z \in \Xx_{VF}$, $\forall z' \in f(z)$:
\begin{align}
    \label{eq:vcbrf_1}
    &\B (x_0) \geq 0, &&
    \\
    \label{eq:vcbrf_2}
    &\B(x') \geq A_1 \B(x), 
    \\
    \label{eq:vcbrf_3}
    &(\B(y) \geq 0) \implies (\B(y) - \B(y') \geq 0), \text{ and}
    \\
    \label{eq:vcbrf_4}
    &(\B(z) \geq 0) \implies (\B(z) - \B(z') - \eta \1_k \geq 0),
\end{align}
where the inequalities are element-wise.
\end{definition}
\begin{theorem}[VCBRFs imply persistence]
\label{thm:vcbrf}
The existence of a VCBRF $\Bb_i: \Xx \rightarrow \R$, $\forall i \in \{1,\ldots,k \}$, satisfying conditions~\eqref{eq:vcbrf_1}-\eqref{eq:vcbrf_4} for a system $\Sys = (\Xx, \Xx_0, f)$ implies that the state sequences of the system visit the set $\Xx_{VF}$ finitely often.
\end{theorem}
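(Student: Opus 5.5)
The plan is a two-part argument: an invariance step showing that the nonnegative orthant $\{\B \geq 0\}$ contains every reachable state, then a well-foundedness step showing that each component of $\B$ acts as a co-B\"uchi ranking function along any state sequence.

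\emph{Invariance.} Fix an arbitrary state sequence $(x_0, x_1, \ldots)$. I will prove by induction on $t$ that $\B(x_t) \geq 0$ element-wise for all $t \in \N$.
\begin{itemize}
\item The base case is condition~\eqref{eq:vcbrf_1}.
\item For the inductive step, suppose $\B(x_t) \geq 0$. Since $A_1 \in \R_{\geq 0}^{k\times k}$ has nonnegative entries, every entry of $A_1\B(x_t)$ is a nonnegative combination of nonnegative numbers, so $A_1 \B(x_t) \geq 0$.
\item Applying condition~\eqref{eq:vcbrf_2} with $x = x_t$ and $x' = x_{t+1} \in f(x_t)$ gives $\B(x_{t+1}) \geq A_1\B(x_t) \geq 0$.
\end{itemize}
This is the only place where the nonnegativity of $A_1$ is used. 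It is the step to state carefully, because the matrix inequality alone would not give invariance without it.

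\emph{Monotonicity.} Since every $x_t$ satisfies $\B(x_t) \geq 0$, the premises of the implications~\eqref{eq:vcbrf_3} and~\eqref{eq:vcbrf_4} always hold along the sequence, so both conditions apply unconditionally.
\begin{itemize}
\item If $x_t \notin \Xx_{VF}$, condition~\eqref{eq:vcbrf_3} with $y = x_t$, $y' = x_{t+1}$ gives $\B(x_{t+1}) \leq \B(x_t)$.
\item If $x_t \in \Xx_{VF}$, condition~\eqref{eq:vcbrf_4} with $z = x_t$, $z' = x_{t+1}$ gives $\B(x_{t+1}) \leq \B(x_t) - \eta \1_k$.
\end{itemize}
It suffices to track a single component, say $\Bb_1$. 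The real sequence $\Bb_1(x_t)$ is nonincreasing, and it drops by at least $\eta$ at every time step $t$ with $x_t \in \Xx_{VF}$.

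\emph{Contradiction.} Suppose for contradiction that $x_t \in \Xx_{VF}$ for infinitely many $t$. Let $N$ be any count of such visits, and let $t_1 < \cdots < t_N$ be the first $N$ visit times. Telescoping the bounds above gives $\Bb_1(x_{t_N+1}) \leq \Bb_1(x_0) - N\eta$. Because $\eta > 0$, for $N > \Bb_1(x_0)/\eta$ the right-hand side is negative. This contradicts $\Bb_1(x_{t_N+1}) \geq 0$ from the invariance step. Hence there is some $i$ with $x_j \notin \Xx_{VF}$ for all $j \geq i$.

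Since the state sequence was arbitrary, the system visits $\Xx_{VF}$ only finitely often, which is the claim. The argument needs neither compactness of $\Xx$ nor any boundedness of $\B$, because invariance already bounds each component below by $0$.
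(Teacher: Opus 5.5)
Your proposal is correct and matches the paper's proof: conditions~\eqref{eq:vcbrf_1}--\eqref{eq:vcbrf_2} with nonnegative $A_1$ make $\B \geq 0$ invariant along every state sequence, and then~\eqref{eq:vcbrf_3}--\eqref{eq:vcbrf_4} force each component to be nonincreasing and to drop by $\eta$ at every visit to $\Xx_{VF}$, which contradicts the lower bound $0$ if there are infinitely many visits. You simply make the induction and the telescoping bound $\Bb_1(x_{t_N+1}) \leq \Bb_1(x_0) - N\eta$ explicit, where the paper leaves both informal.
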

\begin{proof}
Conditions~\eqref{eq:vcbrf_1} and~\eqref{eq:vcbrf_2} serve to overapproximate the reachable states of the system (similar to VBCs) and, inductively, we have $\B(x) \geq 0$ over all reachable states $x \in \Xx$. 
Consequently, from~\eqref{eq:vcbrf_3} and~\eqref{eq:vcbrf_4}, we have $\B(y) - \B(y') \geq 0$ for reachable $y \in \Xx \backslash \Xx_{VF}$ and $\B(z) - \B(z') - \eta \1_k \geq 0$ for reachable $z \in \Xx_{VF}$, respectively.
This means that as the system evolves, the functions do not increase for $x \in \Xx \backslash \Xx_{VF}$ and are strictly decreasing for $x \in \Xx_{VF}$. 
In order to maintain a lower bound of $0$ for all functions over the reachable states, the functions cannot be decreasing infinitely as we continue visiting $\Xx_{VF}$.  
Thus, the existence of a vector of functions satisfying conditions~\eqref{eq:vcbrf_1}-\eqref{eq:vcbrf_4} serves as a proof that $\Xx_{VF}$ is visited finitely often.
\end{proof}

Remark that in condition~\eqref{eq:vcbrf_2}, there is a nonnegative matrix $A_1$ relating $\B(x')$ and $\B(x)$. 
However, such a matrix cannot be incorporated under conditions~\eqref{eq:vcbrf_3} and~\eqref{eq:vcbrf_4}, since it would no longer guarantee a nonincreasing or, respectively, strictly decreasing behavior. 
Employing a nonidentity matrix may breach these conditions and, as a result, could permit visiting $\Xx_{VF}$ infinitely many times.

To verify whether a given system satisfies a desired LTL property, we use VCBRFs on the product $\Sys \otimes \Aa$ as follows.
\begin{definition}[VCBRF for LTL]
\label{def:vcbrf_ltl}
Consider a system $\Sys = (\Xx, \Xx_0, f)$ and an NBA $\Aa= (\Sigma, \Qq, \Qq_0, \delta, \Qq_F)$ representing the complement of an LTL formula $\phi$.
A vector of functions $\B$ consisting of $\Bb_i: \Xx \times \Qq \rightarrow \R$,  $\forall i \in \{1,\ldots, k \}$, is a VCBRF for $\Sys \otimes \Aa$ if $\exists \eta \in \R_{>0}$ and $\exists A_1,A_2,A_3 \in \R_{\geq 0}^{k\times k}$ such that $\forall x \in \Xx$, $\forall x_0 \in \Xx_0$, $\forall x' \in f(x)$ and $\forall n \in \Qq$, $\forall q_0 \in \Qq_0$, $\forall q \in \Qq \backslash \Qq_F$, $\forall r \in \Qq_F$, $\forall n' \in \delta(n, \Ll(x))$, $\forall q' \in \delta(q, \Ll(x))$ and $\forall r' \in \delta(r, \Ll(x))$:
\begin{align}
    \label{eq:vcbrf_ltl1}
    &\B((x_0,q_0)) \geq 0,
    \\
    \label{eq:vcbrf_ltl2}
    &\B((x',n')) \geq A_1 \B((x,n)), 
    \\
    \label{eq:vcbrf_ltl3}
    &(\B((x,q)) \geq 0) \implies (\B((x,q)) - \B((x',q')) \geq 0),
    \\
    \label{eq:vcbrf_ltl4}
    &(\B((x,r)) \geq 0) \implies (\B((x,r)) - \B((x',r')) - \eta \1_k \geq 0),
\end{align}
where the inequalities are element-wise.
\end{definition}
\begin{theorem}[VCBRFs verify LTL]
\label{thm:vcbrf_ltl}
Consider a system $\Sys = (\Xx, \Xx_0, f)$ and an LTL formula $\phi$. Let NBA $\Aa$ represent the complement of the specification, \textit{i.e}, $\neg \phi$.
The existence of a VCBRF $\Bb_i: \Xx \times \Qq \rightarrow \R$, $\forall i \in \{1,\ldots,k \}$, satisfying conditions~\eqref{eq:vcbrf_ltl1}-\eqref{eq:vcbrf_ltl4} over $\Sys \otimes \Aa$ implies $\Sys \models_{\Ll} \phi$.
\end{theorem}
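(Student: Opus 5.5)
We argue by contradiction via the automata-theoretic reduction. Suppose $\Sys \not\models_{\Ll} \phi$. Then there is a state sequence $(x_0, x_1, \ldots)$ of $\Sys$ whose trace $w = (\Ll(x_0), \Ll(x_1), \ldots)$ violates $\phi$. Hence $w$ is accepted by the NBA $\Aa$ for $\neg\phi$, so there is a run $\rho = (q_0, q_1, \ldots)$ with $q_0 \in \Qq_0$, $q_{i+1} \in \delta(q_i, \Ll(x_i))$ and $\Inf(\rho) \cap \Qq_F \neq \emptyset$. We then work with the product sequence $((x_0,q_0),(x_1,q_1),\ldots)$. This is exactly a state sequence of $\Sys \otimes \Aa$, whose initial states are $\Xx_0 \times \Qq_0$ and whose transitions are $(x,n) \mapsto (x',n')$ with $x' \in f(x)$ and $n' \in \delta(n,\Ll(x))$. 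The aim is to show that this sequence cannot visit $\Xx \times \Qq_F$ infinitely often. That contradicts acceptance, and so $\Sys \models_{\Ll} \phi$.

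\textbf{Step 1 (invariance).} We show by induction that $\B((x_i,q_i)) \geq 0$ for all $i$.
\begin{itemize}
\item The base case is condition~\eqref{eq:vcbrf_ltl1}.
\item For the step, condition~\eqref{eq:vcbrf_ltl2} gives $\B((x_{i+1},q_{i+1})) \geq A_1 \B((x_i,q_i))$. Since $A_1$ is entrywise nonnegative and $\B((x_i,q_i)) \geq 0$, the product $A_1\B((x_i,q_i))$ is nonnegative. This is the same mechanism as in the proof of Theorem~\ref{thm:vcbrf}.
\end{itemize}

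\textbf{Step 2 (descent).} Because every product state on the run satisfies the premise $\B \geq 0$, the implications~\eqref{eq:vcbrf_ltl3} and~\eqref{eq:vcbrf_ltl4} fire at every step, with the roles $x = x_i$, $x' = x_{i+1}$, and $q$ or $r$ equal to $q_i$, $q'$ or $r'$ equal to $q_{i+1}$.
\begin{itemize}
\item When $q_i \notin \Qq_F$, we get $\B((x_{i+1},q_{i+1})) \leq \B((x_i,q_i))$ componentwise.
\item When $q_i \in \Qq_F$, we get $\B((x_{i+1},q_{i+1})) \leq \B((x_i,q_i)) - \eta\1_k$.
\end{itemize}
Fix any component, say $\Bb_1$. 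The real sequence $\Bb_1((x_i,q_i))$ is then nonincreasing, and it drops by at least $\eta$ at every index $i$ with $q_i \in \Qq_F$. If $\Qq_F$ is visited infinitely often, then after $N$ such visits $\Bb_1((x_i,q_i)) \leq \Bb_1((x_0,q_0)) - N\eta$. This is eventually negative, contradicting Step 1. So $\rho$ visits $\Qq_F$ only finitely often, which is the desired contradiction.

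\textbf{Main obstacle.} The step needing the most care is the bookkeeping of the product construction. We must check that the quantifiers in Definition~\ref{def:vcbrf_ltl} cover the actual transitions of the run. In particular, the automaton successor is taken with respect to $\Ll(x)$ of the \emph{current} system state, matching $q_{i+1} \in \delta(q_i,\Ll(x_i))$. We must also check that the conditions are stated for all $x \in \Xx$, so they apply to reachable product states without any extra reachability hypotheses. A second point is that $\Bb_1((x_0,q_0))$ is a finite real number, since the $\Bb_i$ are real-valued; this makes the finite-descent count valid. Compactness of $\Xx$ is not needed. Once these are in place, the argument is a direct lift of Theorem~\ref{thm:vcbrf} to $\Sys \otimes \Aa$, with $\Xx_{VF}$ replaced by $\Xx \times \Qq_F$.
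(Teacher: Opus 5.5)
Your proof is correct and follows the paper's route. The paper observes that conditions~\eqref{eq:vcbrf_ltl1}--\eqref{eq:vcbrf_ltl4} make $\B$ a VCBRF for persistence on $\Sys \otimes \Aa$, with $\Xx_{VF}$ replaced by $\Xx \times \Qq_F$, and then cites Theorem~\ref{thm:vcbrf}; you instead write that persistence argument out directly on an accepting product run (nonnegativity via the nonnegative $A_1$, then componentwise descent by $\eta$ at each accepting visit), which is the same argument in more explicit form.
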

\begin{proof}
    Observe that a VCBRF that satisfies conditions~\eqref{eq:vcbrf_ltl1}-\eqref{eq:vcbrf_ltl4} is a VCBRF for persistence over the product $\Sys \otimes \Aa$.
    From Theorem~\ref{thm:vcbrf}, we observe that the product system visits the accepting states only finitely often. 
    Thus, we infer that no trace of the system is present in the language of the NBA $\Aa$, thus verifying the LTL specification.
\end{proof}
We now introduce a notion of vector closure certificates (VCCs) and define them for safety, persistence, and general $\omega$-regular specifications.

\begin{definition}[VCC for Safety]
\label{def:vcc_safe}
   Consider a system $\Sys = (\Xx, \Xx_0, f)$. A vector of $k$ functions $\T$ consisting of $\Tt_i: \Xx \times \Xx \to \R,\forall i \in \set{1,\ldots,k}$ is a VCC for $\Sys$ with respect to a set of unsafe states $\Xx_{u} \subseteq \Xx$ if $\exists \eta \in \R_{>0}$ and $A \in \R_{\geq 0}^{k\times k}$ such that $\forall x, y \in \Xx$, $\forall x_0 \in \Xx_0$, $\forall x_u \in \Xx_u$, $\forall x' \in f(x)$:
    \begin{align}
        \label{eq:vcc_safe1}
        & \T(x,x') \geq 0, \\
        \label{eq:vcc_safe2}
        &\T(x,y) \geq A\T(x', y), \\
        \label{eq:vcc_safe3}
        & \bigvee_{i=1}^{k} \big(\Tt_i(x_0, x_u) \leq - \eta \big),
    \end{align}
    where the inequalities are element-wise.
\end{definition}
We now describe the utility of VCCs for safety.
\begin{theorem}[VCCs imply Safety]
    \label{thm:vcc_safe}
    Consider a system $\Sys = (\Xx,\Xx_0,f)$ and a set of unsafe states $\Xx_u$. The existence of a VCC that satisfies the conditions~\eqref{eq:vcc_safe1}-\eqref{eq:vcc_safe3} implies its safety.
\end{theorem}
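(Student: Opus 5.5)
We argue by contradiction and show that the nonnegative orthant $\{\T \geq 0\}$ overapproximates all reachable transitions. Suppose some state sequence $(x_0, x_1, \ldots)$ has $x_j \in \Xx_u$ for some $j \geq 1$. The goal is to show that $\T(x_i, x_j) \geq 0$ element-wise for every $i < j$, and in particular $\T(x_0, x_j) \geq 0$. This contradicts~\eqref{eq:vcc_safe3}: that condition forces at least one component to satisfy $\Tt_i(x_0,x_j) \leq -\eta < 0$, whereas we would have $\Tt_i(x_0,x_j) \geq 0$ for all $i$.

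The key step is a backward induction on the gap $m = j - i$, with $j$ fixed.

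\emph{Base case} ($m=1$). Condition~\eqref{eq:vcc_safe1} applied to the pair $(x_{j-1}, x_j)$, with $x_j \in f(x_{j-1})$, gives $\T(x_{j-1}, x_j) \geq 0$ directly.

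\emph{Inductive step.} Assume $\T(x_{i+1}, x_j) \geq 0$. Apply~\eqref{eq:vcc_safe2} with $x = x_i$, $x' = x_{i+1} \in f(x_i)$ and $y = x_j$. This gives
\begin{align*}
\T(x_i, x_j) \geq A\, \T(x_{i+1}, x_j) \geq 0 .
\end{align*}
The last inequality holds because $A \in \R_{\geq 0}^{k\times k}$ has nonnegative entries, and a nonnegative matrix maps the nonnegative orthant into itself. This is exactly where nonnegativity of $A$ is needed; it plays the same role as $\lambda \geq 0$ in Theorem~\ref{thm:cc_safe}.

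Iterating the step down to $i = 0$ gives $\T(x_0, x_j) \geq 0$, which yields the contradiction above. Hence $x_j \notin \Xx_u$ for every $j \geq 1$ along every state sequence, which is the definition of safety. The argument also matches the convention that $\Xx_0 \cap \Xx_u = \emptyset$, so we only need to handle $j \geq 1$.

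No single step is a real obstacle. The one place needing care is the order of quantifiers in~\eqref{eq:vcc_safe2}: $y$ is arbitrary and stays fixed at $x_j$, while the first argument steps backward along the trace. We also use that the element-wise inequality $\T(x_i,x_j) \geq A\T(x_{i+1},x_j)$ combines with the nonnegativity of $A$ componentwise, without mixing signs across components. Componentwise nonnegativity then contradicts the disjunction in~\eqref{eq:vcc_safe3}, because that disjunction requires only one negative component, and no component can be negative.
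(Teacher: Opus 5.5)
Your proof is correct and uses the same idea as the paper: by backward induction along a trace, using \eqref{eq:vcc_safe1}, \eqref{eq:vcc_safe2} and $A \geq 0$, every component of $\T$ is nonnegative on reachable pairs, which contradicts the negative component guaranteed by \eqref{eq:vcc_safe3}. The paper instead phrases the conclusion as a reduction to Theorem~\ref{thm:cc_safe} for a single component $\Tt_j$; your direct contradiction is tighter, because it lets the index in \eqref{eq:vcc_safe3} depend on $(x_0,x_u)$ and never requires $\Tt_j$ to satisfy the scalar condition~\eqref{eq:cc_safe2} on its own.
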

\begin{proof}
Let $j \in \set{1,\ldots,k}$ be an index such that $\Tt_j(x_0, x_u) \leq - \eta$ (condition~\eqref{eq:vcc_safe3}).  Moreover, from condition~\eqref{eq:vcc_safe1}, $\Tt_j(x,x') \geq 0$, and inductively, we get $\Tt_j(x',y) \geq 0$. Since $A$ is nonnegative, $A\T(x',y) \geq 0$. 
Then, if $\T(x,y) \geq A\T(x',y)$, $\T(x,y) \geq 0$ as in condition~\eqref{eq:vcc_safe2}. From this we get $\Tt_j(x,y) \geq \sum_{i = 1}^{k} A_{ji}\Tt_i(x',y) \geq 0$.
These three conditions are the conditions of CC for safety over $\Tt_j(\cdot,\cdot)$ according to Definition~\ref{def:cc_safe}. 
Thus, the proof of safety follows accordingly.
\end{proof}

\begin{definition}[VCC for Persistence]
\label{def:vcc_pers}
   Consider a system $\Sys = (\Xx, \Xx_0, f)$. A vector of $k$ functions $\T$ consisting of $\Tt_i: \Xx \times \Xx \to \R,\forall i \in \set{1,\ldots,k}$ is a VCC for $\Sys$ with a set of states $\Xx_{VF} \subseteq \Xx$ that must be visited finitely often if $\exists \eta \in \R_{>0}$ and $A \in \R_{\geq 0}^{k\times k}$ such that $\forall x, y\in \Xx$, $\forall x_0 \in \Xx_0$, $\forall x' \in f(x)$, $\forall y', y''\in \Xx_{VF}$:
    \begin{align}
        \label{eq:vcc_pers1}
        & \T(x,x') \geq 0, \\
        \label{eq:vcc_pers2}
        &\T(x,y) \geq A\T(x', y), \\
        \label{eq:vcc_pers3}
        & \big(\T (x_0,y') \geq 0 \big) \wedge \big( \T(y',y'') \geq 0 \big) \implies \nonumber \\
        & \qquad \bigvee_{i=1}^{k} \big(\Tt_i(x_0,y'') \leq \Tt_i(x_0,y') - \eta \big),
    \end{align}
    where the inequalities are element-wise.
\end{definition}

VCCs can be used to verify persistence, as formally stated below.
\begin{theorem}[VCCs imply Persistence]
    \label{thm:vcc_pers}
    Consider a system $\Sys = (\Xx,\Xx_0,f)$. The existence of a VCC satisfying conditions~\eqref{eq:vcc_pers1}-\eqref{eq:vcc_pers3} implies that the state sequences of the system visit the set $\Xx_{VF}$ only finitely often.
\end{theorem}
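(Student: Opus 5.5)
The plan is to reduce the statement to a disjunctive well-foundedness argument in the style of \cite{podelski2004transition}. There are two stages: first show that conditions~\eqref{eq:vcc_pers1}--\eqref{eq:vcc_pers2} make \emph{every} component of $\T$ nonnegative on reachable pairs, and then use condition~\eqref{eq:vcc_pers3} with an infinite Ramsey argument to rule out infinitely many visits to $\Xx_{VF}$.

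\textbf{Step 1 (transition invariant).} Fix a state sequence $(x_0,x_1,\ldots)$. I will show that $\T(x_i,x_j)\geq 0$ holds element-wise for all $j>i$, by induction on $j-i$.
\begin{itemize}
\item The base case $j=i+1$ is exactly condition~\eqref{eq:vcc_pers1}.
\item For the inductive step, assume $\T(x_{i+1},x_j)\geq 0$. Condition~\eqref{eq:vcc_pers2} with $y=x_j$ gives $\T(x_i,x_j)\geq A\,\T(x_{i+1},x_j)$.
\item Since $A\in\R^{k\times k}_{\geq 0}$, the right-hand side is nonnegative, so $\T(x_i,x_j)\geq 0$.
\end{itemize}
In particular, $\T(x_0,x_j)\geq 0$ for every $j\geq 1$. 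This is the same reasoning used in the proof of Theorem~\ref{thm:vcc_safe}, but now applied to all components simultaneously, which is what makes the premise of~\eqref{eq:vcc_pers3} available.

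\textbf{Step 2 (Ramsey argument).} Suppose, for contradiction, that some state sequence visits $\Xx_{VF}$ infinitely often, at indices $t_1<t_2<\cdots$.
\begin{itemize}
\item For any $l<m$, set $y'=x_{t_l}$ and $y''=x_{t_m}$. Both lie in $\Xx_{VF}$, and by Step 1 we have $\T(x_0,y')\geq 0$ and $\T(y',y'')\geq 0$.
\item Condition~\eqref{eq:vcc_pers3} then gives an index $c(l,m)\in\{1,\ldots,k\}$ with $\Tt_{c(l,m)}(x_0,x_{t_m})\leq \Tt_{c(l,m)}(x_0,x_{t_l})-\eta$. Choose, say, the smallest such index.
\item This defines a finite colouring of the unordered pairs $\{l,m\}$ of $\N$. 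By the infinite Ramsey theorem there is an infinite set $l_1<l_2<\cdots$ and a single colour $j$ such that every pair in this set receives colour $j$.
\item In particular, $\Tt_j(x_0,x_{t_{l_{s+1}}})\leq \Tt_j(x_0,x_{t_{l_s}})-\eta$ for all $s$.
\item Iterating gives $\Tt_j(x_0,x_{t_{l_s}})\leq \Tt_j(x_0,x_{t_{l_1}})-(s-1)\eta$. Because $\eta>0$ and $\Tt_j(x_0,x_{t_{l_1}})$ is a fixed real number, this becomes negative for large $s$.
\end{itemize}
That contradicts $\Tt_j(x_0,\cdot)\geq 0$ on reachable states from Step 1. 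Hence $\Xx_{VF}$ is visited only finitely often.

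\textbf{Main obstacle.} The hard part is the disjunction in~\eqref{eq:vcc_pers3}. Different components may witness the decrease for different pairs, so no single component is obviously decreasing along consecutive visits. Looking only at consecutive pairs $(t_l,t_{l+1})$ does not suffice, since the witnessing component could alternate. This is why the argument must use \emph{all} pairs, which is possible because~\eqref{eq:vcc_pers3} applies to any two visits (Step 1 covers non-adjacent reachable pairs), and then apply Ramsey to extract a monochromatic chain. The remaining checks are routine: the premises of~\eqref{eq:vcc_pers3} hold for every pair, and $x_{t_m}$ is reachable from $x_{t_l}$ strictly later.
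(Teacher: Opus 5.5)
Your proof is correct, and it differs from the paper's at the decisive step. The paper's proof also starts by deriving nonnegativity of all components on reachable pairs from \eqref{eq:vcc_pers1}--\eqref{eq:vcc_pers2} and $A\geq 0$. It then fixes one witnessing index $j$ for \eqref{eq:vcc_pers3} and writes $\Xx_{VF}=\bigcup_{i=1}^m \Xx_{VF_i}$. It uses ``Ramsey's theorem'' only to extract a subsequence of visits lying in a single piece $\Xx_{VF_i}$, which is effectively a pigeonhole step. After that it iterates the decrease of that fixed $\Tt_j$ along consecutive visits.

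That route is shorter. Read with $j$ dedicated to the region, it is the argument that matches the SOS encoding \eqref{eq:sos_vcc_pers3}, where each region gets its own component and only same-region pairs are constrained. However, it tacitly assumes the witnessing component is the same for every pair. The disjunction in \eqref{eq:vcc_pers3} sits inside the quantifier over $y',y''$, so it does not provide this.

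Your approach instead colours pairs of visit times by the witnessing index and then applies the infinite Ramsey theorem. This is the genuine disjunctive well-foundedness argument of \cite{podelski2004transition}. It proves the theorem for the definition exactly as stated and needs no partition of $\Xx_{VF}$. If you prefix it with the paper's pigeonhole step, it also covers conditions imposed only on same-region pairs.

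One small repair is needed. If $x_0\in\Xx_{VF}$, discard the visit at time $0$. Step~1 only gives $\T(x_0,x_j)\geq 0$ for $j\geq 1$, so the premise of \eqref{eq:vcc_pers3} is not justified for pairs involving $t_1=0$.
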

\begin{proof}
From conditions~\eqref{eq:vcc_pers1} and~\eqref{eq:vcc_pers2}, $\Tt_d(x,x') \geq 0$ and $\Tt_d(x,y) \geq \sum_{i = 1}^{k} A_{di}\Tt_i(x',y) \geq 0$, respectively, for all $d \in \{1,\ldots,k\}$. 
Let $j \in \{1,\ldots,k\}$ be an index such that $\big(\T (x_0,y') \geq 0 \big) \wedge \big( \T(y',y'') \geq 0 \big) \implies 
\big(\Tt_j(x_0,y'') \leq \Tt_j(x_0,y') - \eta \big)$ (condition~\eqref{eq:vcc_pers3}). 
This implies that condition $\big[\land_{i \neq j} (\Tt_i(x_0,y') \geq 0) \wedge (\Tt_i(y',y'') \geq 0)\big] \wedge \big(\Tt_j(x_0,y') \geq 0 \big) \wedge \big(\Tt_j(y',y'') \geq 0 \big) \implies 
\big(\Tt_j(x_0,y'') \leq \Tt_j(x_0,y') - \eta \big)$ is satisfied.
Now, consider $\Xx_{VF} = \cup_{i = 1}^{m} \Xx_{VF_i}$. 
We will prove that $\Xx_{VF}$ is visited finitely often by contradiction. 
Take a state sequence $s = (x_0,x_1,\ldots) \in \Xx^{\omega}$ and assume that it visits $\Xx_{VF}$ infinitely often. 
Since $\Tt_j(x,x') \geq 0$, $\Tt_j(x_i,x_{i+1}) \geq 0$ for every $x_i \in s$. 
Similarly, from $\Tt_j(x,y) \geq \sum_{i = 1}^{k} A_{ji}\Tt_i(x',y) \geq 0$, $\Tt_j(x_0,x_i) \geq 0$, and $\Tt_j(x_i,x_{\ell}) \geq 0$ for every $i \in \N$ and every $\ell \geq i + 1$.
Let $(y_0,y_1,\ldots)$ be a subsequence of $s$ that visits $\Xx_{VF}$ only finitely often (i.e., $s = (x_0,x_1,\ldots,y_0,\ldots,y_1,\ldots)$). 
By Ramsey's theorem~\cite{ramsey1987problem}, there exists a subsequence $(z_0,z_1,\ldots) \in \Xx_{VF_i}$ of $s$ that visits $\Xx_{VF_i}$ infinitely often for some $i \in \{1,\dots,m\}$.
From the previous conclusions, we derive $\Tt_j(x_0,z_i) \geq 0$ and $\Tt_j(z_i,z_{\ell}) \geq 0$ $\forall i \in \N$ and $\forall \ell \geq i + 1$. 
Thus, we know that $\Tt_j(x_0,z_\ell)$ is bounded below by 0. 
From condition~\eqref{eq:vcc_pers3} and by induction, we have $\Tt_j(x_0,z_\ell) \leq \Tt_j(x_0,z_0) - \ell \eta, \forall \ell \in \N_{\geq 1}$. 
Therefore, $\exists \ell \in \N$ such that $\Tt_j(x_0,z_\ell) \leq \Tt_j(x_0,z_0) - \ell \eta < 0$, which is a contradiction. Thus, $\Xx_{VF}$ is visited finitely often.
\end{proof}

Definition~\ref{def:vcc_pers} can be extended to LTL specifications on the product of the system and an NBA as follows.
\begin{definition}[VCC for LTL]
\label{def:vcc_ltl}
   Consider a system $\Sys = (\Xx, \Xx_0, f)$ and an NBA $\Aa= (\Sigma,\Qq, \Qq_0, \delta, \Qq_F)$ representing the complement of an LTL formula $\phi$. A vector of $k$ functions $\T$ consisting of $\Tt_i: \Xx \times \Qq \times \Xx \times \Qq \to \R,\forall i \in \set{1,\ldots,k}$, is a VCC for $\Sys \otimes \Aa$ if $\exists \eta \in \R_{ > 0}$ and $A \in \R_{\geq 0}^{k\times k}$ such that $\forall x, y, y' \in \Xx$, $\forall x_0 \in \Xx_0$, $\forall x' \in f(x)$, $\forall n,p \in \Qq$, $\forall n' \in \delta(n, \Ll(x))$,  $\forall q_0 \in \Qq_0$, and $\forall r,r' \in \Qq_F$:
    \begin{align}
        \label{eq:vcc_ltl1}
        & \T\big( (x, n), (x', n') \big) \geq 0, \\
        \label{eq:vcc_ltl2}
        & \T \big((x, n), (y, p) \big) \geq A \T \big( (x', n'), (y, p) \big), \\
        \label{eq:vcc_ltl3}
        & \Big(\T \big((x_0,q_0),(y,r) \big) \geq 0 \Big) \wedge \Big( \T( (y,r), (y',r') ) \geq 0 \Big) \implies \nonumber \\
        & \quad 
        \bigvee_{i=1}^{k} \Big(\Tt_i \big( (x_0,q_0),(y',r') \big) \leq \Tt_i \big( (x_0,q_0),(y,r) \big) - \eta \Big),
    \end{align}
    where the inequalities are element-wise.
\end{definition}
\begin{theorem}[VCCs verify LTL]
\label{thm:vcc_ltl}
   Consider a system $\Sys = (\Xx, \Xx_0, f)$ and an LTL formula $\phi$. Let NBA $\Aa$ represent the complement of the specification, \textit{i.e}, $\neg \phi$. The existence of a VCC satisfying conditions~\eqref{eq:vcc_ltl1}-\eqref{eq:vcc_ltl3} implies that $\Sys \models_{\Ll} \phi$. 
\end{theorem}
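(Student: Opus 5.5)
The plan is to reduce Theorem~\ref{thm:vcc_ltl} to Theorem~\ref{thm:vcc_pers}, applied to the product system $\Sys \otimes \Aa$. This mirrors how Theorem~\ref{thm:vcbrf_ltl} was obtained from Theorem~\ref{thm:vcbrf}.

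\textbf{Step 1: build the product system.} Define the product as the discrete-time system $\Sys \otimes \Aa = (\Xx \times \Qq, \Xx_0 \times \Qq_0, g)$ with transition relation
\[
g(x,n) = \{(x',n') \mid x' \in f(x),\ n' \in \delta(n,\Ll(x))\},
\]
and take the set to be visited finitely often to be $\Xx \times \Qq_F$. The state space is compact, since $\Xx$ is compact and $\Qq$ is finite. Under this reading:
\begin{itemize}
\item condition~\eqref{eq:vcc_ltl1} is exactly~\eqref{eq:vcc_pers1} for the product;
\item condition~\eqref{eq:vcc_ltl2} is exactly~\eqref{eq:vcc_pers2} with $y$ replaced by the product state $(y,p)$;
\item condition~\eqref{eq:vcc_ltl3} is exactly~\eqref{eq:vcc_pers3} with $y' := (y,r)$ and $y'' := (y',r')$ ranging over $\Xx \times \Qq_F$.
\end{itemize}
Hence $\T$ is a VCC for persistence over $\Sys \otimes \Aa$. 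Theorem~\ref{thm:vcc_pers} then gives that every state sequence of the product visits $\Xx \times \Qq_F$ only finitely often.

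\textbf{Step 2: contrapositive.} Suppose some trace $w = (\Ll(x_0),\Ll(x_1),\ldots) \in TR(\Sys,\Ll)$ fails $\phi$. Then $w \models \neg\phi$, so $\Aa$ accepts $w$. This gives a run $\rho = (q_0,q_1,\ldots)$ with $q_0 \in \Qq_0$, $q_{i+1} \in \delta(q_i,\Ll(x_i))$, and $\Inf(\rho)\cap\Qq_F \neq \emptyset$. The paired sequence $((x_0,q_0),(x_1,q_1),\ldots)$ is then a state sequence of $\Sys \otimes \Aa$, since $x_{i+1}\in f(x_i)$ and $q_{i+1}$ comes from the labeling of $x_i$, matching $g$. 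Because $\rho$ enters $\Qq_F$ infinitely often, this sequence visits $\Xx\times\Qq_F$ infinitely often, contradicting Step~1. Therefore every trace satisfies $\phi$, that is, $\Sys \models_{\Ll}\phi$.

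\textbf{Main obstacle.} The substantive work sits in Theorem~\ref{thm:vcc_pers}; here the only delicate point is bookkeeping. I must check that the index conventions of the product match those used in Definition~\ref{def:vcc_ltl}: the automaton reads $\Ll(x)$ on the transition out of $x$. I must also check that the quantifiers in~\eqref{eq:vcc_ltl3} cover all pairs of product states in $\Xx\times\Qq_F$, which holds because $y,y'$ range over all of $\Xx$ and $r,r'$ over all of $\Qq_F$. Once these match, the persistence argument (Ramsey-style extraction of an infinite subsequence in the accepting region and the unbounded decrease of some component $\Tt_j$) transfers verbatim.
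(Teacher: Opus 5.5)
Your proposal is correct and is the same argument as the paper's: you treat the VCC as a VCC for persistence on the product $\Sys \otimes \Aa$ with $\Xx \times \Qq_F$ as the finitely visited set, invoke Theorem~\ref{thm:vcc_pers}, and conclude that no system trace admits an accepting run of $\Aa$. The only difference is that you spell out the product transition relation and the contrapositive (pairing a violating trace with an accepting run to obtain a product state sequence that visits $\Xx \times \Qq_F$ infinitely often), which the paper's three-line proof leaves implicit.
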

\begin{proof}
    Observe that a VCC satisfying the conditions~\eqref{eq:vcc_ltl1}-~\eqref{eq:vcc_ltl3} mirrors a VCC for persistence of the product $\Sys \otimes \Aa$.
    From Theorem~\ref{thm:vcc_pers}, we observe that the product system visits the accepting states only finitely often. Thus, we infer that no trace of the system is present in the language of the NBA $\Aa$, thus verifying the LTL specification.
\end{proof}

The benefit of the VCC formulation over CC's for safety (and persistence) can be described as follows. 
Consider $\Xx_{u} = \bigcup_{j=1}^{m} \Xx_{u_j}$ (resp. $\Xx_{VF} = \bigcup_{j=1}^{m} \Xx_{VF_j}$). 
Then, following the disjunctive well-foundedness of closure certificates, without loss of generality, one can choose different $j$ indices for condition~\eqref{eq:vcc_safe3}  (resp. condition~\eqref{eq:vcc_pers3}) while modifying $\eta$ to $\eta_j$, such that the choice of $\eta_j$ for each region $\Xx_{u_j}$ (resp. $\Xx_{VF_j}$) is different.
Observe that for VCBRFs, the decrease is also imposed across all functions, and we cannot dedicate each function $\Bb_j$ to $\Xx_{VF_j}$.
In addition, note that conditions~\eqref{eq:cc_safe1}-\eqref{eq:cc_safe2} provide the transition invariant in the form of a single function, while conditions~\eqref{eq:vcc_safe1}-\eqref{eq:vcc_safe2} provide the transition invariant as linear combinations of the different functions.
Hence, this flexibility in choosing a different function and factor $\eta$, as well as the generality of the transition invariant, makes the formulation of VCCs much more powerful than the standard formulation of CC. 
Similarly, for the LTL specification, one can choose different $j$ indices for each accepting state $r$ of NBA $\Aa$ and use different $\eta_j$ instead of imposing condition~\eqref{eq:vcc_ltl3} on all possible pairs of accepting states $r$ and $r'$.
As a result, the use of multiple functions in this vector form allows us to utilize simpler templates of functions as certificates. Formally, we can formulate this deductive power of VCCs in a manner similar to that of VBCs \cite{sogokon2018vector}.

\begin{theorem}
    Every polynomial CC is (trivially) a VCC. 
    The converse is not necessarily true; i.e., there exist polynomial VCCs sufficient for proving certain $\omega$-regular properties where a polynomial CC of the same template does not exist. 
\end{theorem}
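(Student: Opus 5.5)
The statement has two halves, and I would treat them separately.

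\emph{First half (CC $\Rightarrow$ VCC).} Take $k=1$, $\T = (\Tt)$ and $A = [\lambda] \in \R_{\geq 0}^{1\times 1}$, keeping the same $\eta$. Then the conditions line up directly:
\begin{itemize}
\item \eqref{eq:vcc_safe1}--\eqref{eq:vcc_safe2} are literally \eqref{eq:cc_safe1}--\eqref{eq:cc_safe2};
\item the disjunction in \eqref{eq:vcc_safe3} has the single disjunct \eqref{eq:cc_safe3};
\item the same holds for persistence (\eqref{eq:cc_pers1}--\eqref{eq:cc_pers3} versus \eqref{eq:vcc_pers1}--\eqref{eq:vcc_pers3}) and for LTL (\eqref{eq:cc_ltl1}--\eqref{eq:cc_ltl3} versus \eqref{eq:vcc_ltl1}--\eqref{eq:vcc_ltl3}).
\end{itemize}
Polynomiality and the template are preserved, so nothing more is needed here.

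\emph{Second half (converse fails).} I would give one explicit counterexample for safety, which is an $\omega$-regular property. The plan is to mimic the VBC separation example of \cite{sogokon2018vector}.

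Choose a polynomial system whose reachable transitions need a non-convex or multi-piece invariant. A natural choice is a linear rotation-like map on a compact set $\Xx\subseteq\R^2$ such as $f(x_1,x_2)=(x_2,x_1)$ or $f(x)=Rx$. Take $\Xx_0$ and $\Xx_u$ so that the separating invariant is an intersection of two half-space-type regions, for instance $\Xx_0=\{x_1\ge 1/2,\ x_2\ge 1/2\}$-ish and $\Xx_u$ containing points of both signs.

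Fix the template to be affine (degree $1$) in $(x,y)$. I would then exhibit $\Tt_1,\Tt_2$ affine and a nonnegative $A$, for example the permutation matrix $A=\begin{pmatrix}0&1\\1&0\end{pmatrix}$, mirroring the swap dynamics. The verification is routine:
\begin{itemize}
\item each $\Tt_i(x,f(x))\ge0$;
\item $\T(x,y)\ge A\T(f(x),y)$ holds coefficientwise;
\item every $x_u$ makes at least one $\Tt_i(x_0,x_u)\le-\eta$.
\end{itemize}

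The main obstacle is proving that \emph{no} affine scalar $\Tt$ with any $\lambda\ge0$ works. My approach is as follows. Write $\Tt(x,y)=a^\top x+b^\top y+c$. Condition \eqref{eq:cc_safe2} must hold for all $x,y\in\Xx$; since it is affine in $y$, it forces $b=\lambda b$. This gives either $\lambda=1$ or $b=0$.
\begin{itemize}
\item If $b=0$, then $\Tt(x_0,x_u)$ does not depend on $x_u$. Hence \eqref{eq:cc_safe3} would force $\Tt(x_0,\cdot)<0$ everywhere, which contradicts \eqref{eq:cc_safe1} evaluated at $x=x_0$.
\item If $\lambda=1$, then \eqref{eq:cc_safe2} reduces to $a^\top x\ge a^\top f(x)$ on $\Xx$. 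With the swap or rotation dynamics on a symmetric set, this forces $a$ into the invariant direction. The remaining freedom cannot separate $\Xx_0\times\Xx_u$ from the points $(x,f(x))$ required by \eqref{eq:cc_safe1}, because the unsafe set is chosen to straddle that direction.
\end{itemize}
Each case ends in a linear-algebra contradiction. Getting the sets right so that both cases fail cleanly while the vector version succeeds is the delicate part. If the affine impossibility proves messy, my fallback is the analogous persistence example, where the VCC can dedicate $\Tt_j$ to different pieces $\Xx_{VF_j}$ with different $\eta_j$ as discussed above.
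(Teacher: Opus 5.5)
Your first half is fine and matches the paper: taking $k=1$, $A=[\lambda]$ and the same $\eta$ turns any CC into a VCC. The second half, however, is a plan rather than a proof, and its one concrete step is wrong.

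\emph{The missing witness.} The claim is existential, so its entire content is an explicit system, an explicit VCC, and a proof that no CC of that template exists. You never fix $\Xx$, $\Xx_0$, $\Xx_u$ (they are ``-ish''), and you never write down $\Tt_1,\Tt_2$. The ``routine'' verification of \eqref{eq:vcc_safe1}--\eqref{eq:vcc_safe3} therefore cannot be carried out. The paper instead fixes the five-state system of Figure~\ref{fig:counterexample}, with $q_i$ encoded as $i\in\R$. It derives from the CC conditions a short list of inequalities that no quadratic $\Tt$ satisfies (checked with Gurobi), and it exhibits a two-component quadratic VCC.

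\emph{The false step.} Condition \eqref{eq:cc_safe2} reads $(1-\lambda)b^\top y+a^\top x-\lambda a^\top f(x)+(1-\lambda)c\ge 0$ for all $x,y\in\Xx$. Since $\Xx$ is compact (as the paper assumes and as you chose), nonnegativity of an affine function of $y$ on a bounded set does not force $(1-\lambda)b=0$. So the dichotomy ``$\lambda=1$ or $b=0$'' is unjustified. The cases $\lambda\neq1$, $b\neq0$ are never handled, and the $\lambda=1$ branch is only gestured at.

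\emph{The missing idea, and a working counterexample.} With $\lambda\ge0$, conditions \eqref{eq:cc_safe1}--\eqref{eq:cc_safe2} give by induction $\Tt(x,y)\ge0$ whenever $y$ is reachable from $x$ in one or more steps, for any $\lambda$. Call these closure pairs; with this fact no case analysis on $\lambda$ is needed. An affine $\Tt$ is ruled out as soon as some $y$ reachable from some $x_0\in\Xx_0$ lies in the convex hull of $\Xx_u$.

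The same fact drives the paper's quadratic obstruction. There $\Tt(0,\cdot)$ must be $\le-\eta$ at $1$ and $3$, and $\ge0$ at $2$ and, via $\Tt(0,4)\ge\lambda\Tt(2,4)\ge0$, at $4$. No quadratic has that sign pattern.

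Because $A\ge0$, the same induction applies to every component $\Tt_i$ of a VCC. Hence an affine VCC can never cut off an unsafe pair lying in the convex hull of the closure pairs. This can sink your swap example. If, say, $(1,1)\in\Xx_0$ and $(0,2)\in\Xx_u$, then $\big((1,1),(0,2)\big)$ is the midpoint of the closure pairs $\big((0,2),(0,2)\big)$ and $\big((2,0),(0,2)\big)$. So no affine VCC exists there either.

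A clean separation lets the disjunction in \eqref{eq:vcc_safe3} assign different unsafe pieces to different components. Take $\Xx=[-2,2]$, $f(x)=x$, $\Xx_0=\{0\}$, $\Xx_u=[-2,-1]\cup[1,2]$. An affine CC would need $\Tt(0,0)\ge0$ by \eqref{eq:cc_safe1}, yet $\Tt(0,0)=\tfrac12\big(\Tt(0,1)+\Tt(0,-1)\big)\le-\eta$. Meanwhile $\Tt_1=y-x+\tfrac12$ and $\Tt_2=x-y+\tfrac12$, with $A$ the identity and $\eta=\tfrac12$, form an affine VCC.

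If you reuse the paper's five-state example instead, do not copy its displayed coefficients. They give $\Tt_1(0,0)\approx-3633.7<0$ despite the self-loop at $q_0$, violating \eqref{eq:vcc_safe1}. The choice $\Tt_1=(y-\tfrac12)(y-\tfrac32)$, $\Tt_2=(y-\tfrac52)(y-\tfrac72)$, with $A$ the identity, does work there.
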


\begin{proof}
 Interestingly, setting the $A$ matrix as the identity reduces VCCs to CCs. 
    See Section~\ref{sec:casestudies} for examples of VCCs with polynomial degrees lower than scalar CCs.
    We will demonstrate the existence of polynomial VCC when a polynomial CC of the same template does not exist in the case of safety. Take the system shown in Figure~\ref{fig:counterexample}. Assume there exists a quadratic CC and consider the conditions shown in Definition~\ref{def:cc_safe}. Based on the following sets of inequalities, it can be verified that there is no feasible quadratic template for $\Tt(x,y)$. We used Gurobi~\cite{gurobi} to confirm the infeasibility of this bilinear problem (solved as a Mixed-Integer Quadratically Constrained Program (MIQCP)). Note that the state variable is one-dimensional.
    {\allowdisplaybreaks
    \begin{align*}
        \Tt(0,2) &\geq 0,\\
        \Tt(2,4) &\geq 0,\\
        \Tt(0,1) &\leq -\eta,\\
        \Tt(0,3) &\leq -\eta,\\
        \Tt(0,0) &\geq \lambda \Tt(2,0),\\
        \Tt(0,4) &\geq \lambda \Tt(2,4),\\
        \lambda &\geq 0,
    \end{align*}
    }
    where $\eta$ was as low as $10^{-4}$.
    However, we were able to obtain the following quadratic VCC with two components, $\eta = 10^{-3}$ and $A = \big(
    \begin{smallmatrix}
      4.703 & 4.703\\
      1.621 & 0.354
    \end{smallmatrix}\big)$, according to Definition~\ref{def:vcc_safe}:
    \begin{align*}
    \Tt_1(x,y) &= 847.87xy - 883.63y^2 - 3391.47x + 4442.96y\\
    & - 3633.71,\\
    \Tt_2(x,y) &= 0.080xy + 0.081y^2 - 0.319x - 0.564y + 0.965.
    \end{align*}
\end{proof}

\begin{figure}[t]
    \centering
    \begin{tikzpicture}[node distance =2cm]
    \node[initial, state, draw, initial text =, fill=green!20!white] (0) at (-3,0) {$q_0$};
    \node[state, fill = red!20!white] (1) at (-1.5,0) {$q_1$};
    \node[state, fill = blue!10!white] (2) at (0,0) {$q_2$};
    \node[state, fill = red!20!white] (3) at (1.5, 0) {$q_3$};
    \node[state, fill = blue!10!white] (4) at (3, 0) {$q_4$};
    \path[->]
    (0) edge[bend left = 6em] node[above]{} (2)
    (2) edge[bend left = 6em] node[above]{} (4)
    (4) edge[loop right] node[above]{} (4)
    (2) edge[bend left = 6em] node[above]{} (0)
    (0) edge[loop above] node[above]{} (0)
    (1) edge[] node[above]{} (2)
    (3) edge[] node[above]{} (4);   
    \end{tikzpicture}
    \caption{Illustrative example demonstrating the existence of VCCs over barrier certificates. The initial state is shown in green. The unsafe states are shown in red.}
    \label{fig:counterexample}
\end{figure}
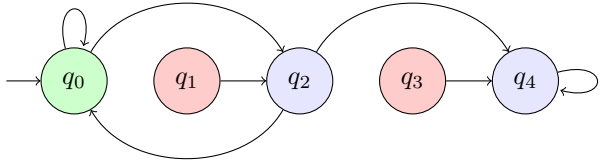

Observe that in the formulation of both VCBRFs and VCCs, setting $k = 1$ retrieves the standard conditions for the ranking functions and the closure certificates, respectively.
In the next section, we introduce a computational tool for finding VCBRFs and VCCs and discuss the benefits of these certificates from the perspective of computational complexity.

\section{Computation of Vector Certificates}
\label{sec:synth}
This section presents an approach to synthesize VCBRFs and VCCs using sum-of-squares (SOS)~\cite{parrilo2003semidefinite} programming when the relevant sets of the system $\Sys = (\Xx, \Xx_0, f)$ are semi-algebraic sets and the transition function $f:\Xx \to \Xx$ is polynomial.
A set $Y \subseteq \R^n$ is semi-algebraic if it can be defined with a vector of polynomial inequalities $h(x)$ as $Y = \{ x \mid h(x) \geq 0 \}$, where the inequalities are element-wise.
Moreover, to use SOS programming, we assume that the nonnegative matrix $A$ is provided.
Lastly, to find VCCs, we first fix the template to be a linear combination of user-defined basis functions:
$
   \Tt(x,y) = \mathbf{c}^T\mathbf{p}(x,y) = \sum_{m = 1}^{n} c_m p_m(x,y), 
$
where functions $p_m$ are monomials over state variables $x$ and $y$, and $c_1, \ldots, c_n$ are real coefficients.
For VCBRFs, the monomials $\mathbf{p}(x)$ are only over the state variable $x$.

We now describe the relevant SOS conditions for safety, persistence, and LTL specifications.

\subsection{Safety}
\label{subsec:find_safety}
To adopt an SOS approach to find VCCs as in Definition~\ref{def:vcc_safe}, we consider $\Xx_{u} = \cup_{j=1}^{m} \Xx_{u_j}$ and the sets $\Xx$, $\Xx_0$, and $\Xx_{u_j}$ to be semi-algebraic sets defined with the help of vectors of polynomial inequalities $g_{A}(x)$, $g_{0}(x)$, and $g_{u,j}(x)$, respectively.
As these sets are semi-algebraic, the sets $\Xx \times \Xx$ and $\Xx_0 \times \Xx_{u_j}$ are semi-algebraic
with the corresponding vectors $g_{B}$ and $g_{C,j}$, respectively.
Then the search for a VCC, as in Definition~\ref{def:vcc_safe}, reduces to showing that the following polynomials are SOS:
\begin{align}
    \label{eq:sos_vcc_safe1} 
    & \Tt_i(x, x') \!-\! \lambda_{A,i}^T(x)g_{A}(x),~~ \forall i \in \{1,\ldots,k\}, \\
    \label{eq:sos_vcc_safe2}
    &\Tt_{i}(x, y) \!-\! \sum_{j = 1}^{k} A_{ij} \Tt_j(x', y)-\lambda_{B,i}^T(x,y)g_{B}(x,y),\\ 
    &\hspace{10em} \forall i \in \{1,\ldots,k\},\\
    \label{eq:sos_vcc_safe3}
    &-\eta_j - \Tt_j(x_0, x_u) - \lambda_{C,j}^T(x_0,x_u)g_{C,j}(x_0,x_u),\\ 
    &\hspace{10em} \forall j \in \{1,\ldots,m \},
\end{align}
where $x' = f(x)$, $\eta_j \in\R_{ > 0}$, $m \leq k$, $A \in \R_{\geq 0}^{k\times k}$ and multipliers $\lambda_{A,i}$, $\lambda_{B,i}$, $\lambda_{C,j}$, are SOS polynomials over the state variable $x$, the state variables $x,y$, and the state variables $x_0,x_u$ over the sets $\Xx$, $\Xx \times \Xx$, and $\Xx_0 \times \Xx_{u_j}$, respectively.

\subsection{Persistence}
\label{subsec:find_persistence}

Here, we will cover the conditions for both VCBRFs and VCC.
To search for VCBRFs as in Definition~\ref{def:vcbrf}, we consider the sets $\Xx$, $\Xx_0$, and $\Xx_{VF}$ to be semi-algebraic sets defined with the help of vectors of polynomial inequalities $g_{A}(x)$, $g_{0}(x)$, and $g_{B}(x)$, respectively.
As $\Xx_{VF}$ is semi-algebraic, $\Xx \setminus \Xx_{VF}$ is also semi-algebraic with corresponding vector $g_{C}(x)$.
Then the search for a VCBRF, as in Definition~\ref{def:vcbrf}, reduces to showing that the following polynomials are SOS $\forall i \in \{1,\ldots,k\}$:
\begin{align}
    \label{eq:sos_vcbrf_pers1} 
    & \Bb_i(x_0) - \lambda_{0,i}^T(x)g_{0}(x),\\
    \label{eq:sos_vcbrf_pers2}
    &\Bb_{i}(x') - \sum_{j = 1}^{k} A_{1_{ij}} \Bb_j(x) - \lambda_{A,i}^T(x)g_{A}(x),\\
    \label{eq:sos_vcbrf_pers3}
    & \Bb_{i}(y) - \Bb_{i}(y') - \sum_{j = 1}^{k} A_{2_{ij}} \Bb_j(y) - \lambda_{C,i}^T(y)g_{C}(y),\\
    \label{eq:sos_vcbrf_pers4}
    & \Bb_{i}(z) - \Bb_{i}(z') - \eta - \sum_{j = 1}^{k} A_{3_{ij}} \Bb_j(z) - \lambda_{B,i}^T(z)g_{B}(z),
\end{align}
where $x' = f(x)$, $y' = f(y)$, $z' = f(z)$, $\eta \in\R_{ > 0}$, $A_1, A_2, A_3 \in \R_{\geq 0}^{k\times k}$ and  the multipliers $\lambda_{A,i}$, $\lambda_{B,i}$, $\lambda_{C,j}$, are SOS polynomials over the state variable $x$, $z$ and $y$ over the sets $\Xx$, $\Xx_{VF}$ and $\Xx \setminus \Xx_{VF}$, respectively.

To find VCCs as in Definition~\ref{def:vcc_pers}, we consider $\Xx_{VF} = \cup_{j=1}^{m}\Xx_{VF_j}$ and the sets $\Xx$, $\Xx_0$, and $\Xx_{VF_j}$ to be semi-algebraic sets defined with the help of vectors of polynomial inequalities $g_{A}(x)$, $g_{0}(x)$, and $g_{VF,j}(x)$, respectively.
As these sets are semi-algebraic, the sets $\Xx \times \Xx$ and $\Xx_0 \times \Xx_{VF_j} \times \Xx_{VF_j}$ are semi-algebraic
with the corresponding vectors $g_{B}$ and $g_{C,j}$, respectively.
Then the search for a VCC, as in Definition~\ref{def:vcc_pers}, reduces to showing that the following polynomials are SOS:
\begin{align}
    \label{eq:sos_vcc_pers1} 
    & \Tt_i(x, x') - \lambda_{A,i}^T(x)g_{A}(x),~~ \forall i \in \{1,\ldots,k\},\\
    \label{eq:sos_vcc_pers2}
    &\Tt_{i}(x, y) - \sum_{j = 1}^{k} A_{ij} \Tt_j(x', y)-\lambda_{B,i}^T(x,y)g_{B}(x,y),\nonumber\\
    &\hspace{10em} \forall i \in \{1,\ldots,k\},\\
    \label{eq:sos_vcc_pers3}
    & \Tt_j(x, y) - \eta_j - \Tt_j(x,y') - \sum_{i = 1}^{k} \big(\gamma_i \Tt_i(x,y) + \rho_i \Tt_i(y,y')\big) \nonumber \\
    & \quad - \lambda_{C,j}^T(x_0,x_u)g_{C,j}(x_0,x_u), ~~ \forall j \in \{1,\ldots,m \},
\end{align}
where $x' = f(x)$,  $\eta_j \in\R_{ > 0}$, $\gamma_i, \rho_i \in \R_{ \geq 0}$, $m \leq k$, $A \in \R_{\geq 0}^{k\times k}$ and  the multipliers $\lambda_{A,i}$, $\lambda_{B,i}$, $\lambda_{C,j}$, are SOS polynomials over the state variable $x$, the state variables $x,y$, and the state variables $x,y,y'$ over the sets $\Xx$, $\Xx \times \Xx$, and $\Xx_0 \times \Xx_{VF_j} \times \Xx_{VF_j}$, respectively.

\subsection{LTL Specifications}
\label{subsec:find_ltl}

For a given LTL specification, the relevant NBA has finitely many letters $\sigma \in \Sigma$. Without loss of generality, the set $\Xx$ can be partitioned into finitely many partitions $\Xx_{\sigma_1}, \ldots, \Xx_{\sigma_p}$, where $\forall x \in \Xx_{\sigma_m}$ we have $\Ll(x) = \sigma_m$.
Given an element $\sigma_m \in \Sigma$, we can uniquely characterize the relation $\delta_{\sigma_i}$ as $(q'_i, q_i) \in \delta_{\sigma_i}$ if and only if  $q'_i \in \delta(q_i, \sigma_i)$.
We assume that the sets $\Xx$, $\Xx_{0}$, and $\Xx_{\sigma_m}$ $\forall \sigma_m$ are semi-algebraic and characterized by polynomial vectors of inequalities $g(x)$, $ g_0(x)$, and $g_{\sigma_m, A}(x)$, respectively.
Similarly, we consider polynomial vectors of inequalities $g_{\sigma_m, B}(x, y)$ in the product space $\Xx_{\sigma_m} \times \Xx$ and $g_{C,j}(x_0, y, y')$ in $\Xx_0 \times \Xx \times \Xx$.

The search for VCBRF can be reduced to showing that $\forall x \in \Xx$, $\forall x_0 \in \Xx_0$, and $\forall n \in \Qq$, $\forall q_0 \in \Qq_0$, $\forall q \in \Qq \backslash \Qq_F$, $\forall r \in \Qq_F$, $\forall n' \in \delta(n, \Ll(x))$, $\forall q' \in \delta(q, \Ll(x))$ and $\forall r' \in \delta(r, \Ll(x))$, the following polynomials are SOS $\forall i \in \{1,\ldots,k\}$:
\begin{align}
    \label{eq:sos_vcbrf_ltl1} 
    & \Bb_i^{(q_0)}(x_0) - \lambda_{0,i}^T(x_0)g_{0}(x_0),\\
    \label{eq:sos_vcbrf_ltl2}
    &\Bb_{i}^{(n')}(x') - \sum_{j = 1}^{k} A_{1_{ij}} \Bb_j^{(n)}(x) - \lambda_{\sigma_m, A, i}^T(x)g_{\sigma_m, A}(x),\\
    \label{eq:sos_vcbrf_ltl3}
    & \Bb_{i}^{(q)}(x) - \Bb_{i}^{(q')}(x') - \sum_{j = 1}^{k} A_{2_{ij}} \Bb_j^{(q)}(x) \nonumber\\
    &\quad - \bar{\lambda}_{\sigma_m, A, i}^T(x)g_{\sigma_m, A}(x),\\
    \label{eq:sos_vcbrf_ltl4}
    & \Bb_{i}^{(r)}(x) - \Bb_{i}^{(r')}(x') - \eta - \sum_{j = 1}^{k} A_{3_{ij}} \Bb_j^{(r)}(x) \nonumber\\
    &\quad - \hat{\lambda}_{\sigma_m, A, i}^T(x)g_{\sigma_m, A}(x),
\end{align}
where $x' = f(x)$, $A_1, A_2, A_3 \in \R_{\geq 0}^{k\times k}$, $\eta\in \R_{> 0}$, and $\lambda_{0, i}$, $\lambda_{\sigma_m, A, i}$, $\bar{\lambda}_{\sigma_m, A, i}$, and $\hat{\lambda}_{\sigma_m, A, i}$ are SOS polynomials of appropriate dimensions over $\Xx_{0}$ and $\Xx_{\sigma_m}$. 
Remark that the superscripts $(u)$ for each function $\Bb_i$ above denote automata states.

Now, we can reduce the search for a VCC to showing that $\forall x, y, y' \in \Xx$, $\forall n, p \in \Qq$, $\forall q_0 \in \Qq_0$, $\forall r \in {\Qq_F}$, and $\forall \sigma_m \in \Sigma$, such that $n' \in \delta_{\sigma_m} (n)$, the following polynomials are SOS:
\begin{align}
\label{eq:sos_vcc_ltl1}
&\Tt_i^{(n,n')}(x, x') - \lambda^T_{\sigma_m, A, i}(x)g_{\sigma_m, A}(x),~~ \forall i \in \{1,\ldots,k\}, \\
\label{eq:sos_vcc_ltl2}
& \Tt_{i}^{(n,p)}(x, y) - \sum_{j = 1}^{k} A_{ij} \Tt_j^{(n',p)}(x',y) \nonumber \\
& \quad -\lambda^T_{\sigma_m, B, i}(x, y) g_{\sigma_m, B}(x, y),~~ \forall i \in \{1,\ldots,k\},\\ 
\label{eq:sos_vcc_ltl3}
& \Tt_j^{(q_0,r)} (x_0,y) - \eta_j - \sum_{i = 1}^{k} \big(\gamma_i \Tt_i^{(q_0,r)}(x_0,y) + \rho_i \Tt_i^{(r,r)}(y,y')\big) \nonumber\\
&\quad - \Tt_j^{(q_0,r)}(x_0,y') - \lambda^T_{C, j}(x_0, y, y') g_{C,j}(x_0, y, y'), \nonumber\\
&\hspace{10em} \forall j \in \{1,\ldots,|\Qq_F|\},
\end{align}
where $x' = f(x)$, $\lambda_{\sigma_m, A, i}$, $\lambda_{\sigma_m, B, i}$, and $\lambda_{\sigma_m, C, j}$ are SOS polynomials of appropriate dimensions over their respective regions, $ A \in \R_{\geq 0}^{k\times k}$, $|\Qq_F| \leq k$, $\eta_j \in \R_{> 0}$, and $\gamma_i, \rho_i \in \R_{\geq 0}$. Note that the $(u,v)$ superscripts for each function $\Tt_i$ above denote automata state pairs.

A common tool to search for SOS polynomial certificates is to use solvers such as~\cite{wang2021tssos}.
The complexity of determining whether the above equations are SOS is $O\big(k \binom{2n+d}{d}^2 \big)$~\cite{parrilo2003semidefinite}, when searching for VCCs for safety or persistence, where $n$ is the dimension of the set of states and $2d$ is the degree of the polynomial.
The complexity of verifying LTL specifications is $O\big( k|\Qq|^2 \binom{2n+d}{d}^2\big)$, where $|\Qq|$ indicates the number of automata states~\cite{murali2024closure}.
By allowing lower degrees of functions to act as certificates in the vector formulation, we incur a linear complexity in $k$, while alleviating the polynomial complexity due to the degree $2d$.
Thus, one may use existing techniques to reduce the computational burden in the search for such certificates.
For VCBRFs, the term $\binom{2n+d}{d}$ is replaced by $\binom{n+d}{d}$, making the computation for VCBRFs relatively easier and faster, especially for systems of higher dimensions.

\section{Case Studies}
\label{sec:casestudies}

This section covers case studies used to demonstrate the utility of VCBRFs and VCCs for safety, persistence, and LTL specifications, where appropriate. 
Given a system with state $x = x(t)$, we denote the next state using $x' = x(t+1) = f(x)$. 
The simulations were conducted on a Windows 11 device equipped with an AMD Ryzen 9 4900HS Processor and 16GB of RAM. 
We used TSSOS \cite{wang2021tssos} in Julia to implement the SOS formulations provided in Section \ref{sec:synth}. 
A comparison table of our experiments alongside some key parameters can be found in Table~\ref{tab:compare}.
All computed certificates can be found in Appendix~\ref{appendix:casestudies}.

\subsection{Safety}
\label{subsec:casestudy_safety}
We illustrate the advantage of using VCCs for verifying safety on two 2D systems with states $x = [x_1, x_2]^T$: a simple model and the Van der Pol oscillator.
We used conditions \eqref{eq:vcc_safe1}-\eqref{eq:vcc_safe3} formulated as an SOS optimization problem, as described in Section \ref{subsec:find_safety}. 

\subsubsection{2D Simple System}
\label{subsec:casestudy_safety1}

The dynamics of our model are given by the following difference equations:
\begin{align*}
&\begin{cases}
    x_1' = x_2,\\
    x_2' = -x_1.
\end{cases}
\end{align*}
The state set, initial set, and unsafe set are given by $\Xx = [-4,4]^2$, $\Xx_0 = [0,0.5]\times[-3.5,-3]$, and $\Xx_{u} = ([-4,-1]\times[1,4]) \cup ([1,4]\times[-4,-1])$, respectively. 
We used $\eta_j$ as decision variables to minimize with a lower bound of $0.001$ for both unsafe regions in the search for the VCC.
We start from a linear function in two state variables $x,y$ (four scalar variables $x_1,x_2,y_1,y_2$) as the template for the VCC per Definition~\ref{def:vcc_safe}. 
We then increase these up to a maximum degree $degree_{max} = 7$ polynomial and a maximum number of $k_{max} = 2$ functions. 
The smallest degree of a standard closure certificate we could find was degree five. 
However, we were able to find a VCC using a degree three polynomial template with $k = 2$ and matrix $A = \big(
\begin{smallmatrix}
  0 & 1\\
  1 & 0
\end{smallmatrix}\big)$. 
Figure \ref{fig:casestudy_safety1} shows sampled runs of the dynamical system that confirm $\Xx_{u}$ is never visited.
Table~\ref{tab:compare} shows key comparisons of our results with other methods for the verification of safety. 
\begin{figure}[!t]
    \centering
    \epsfig{file=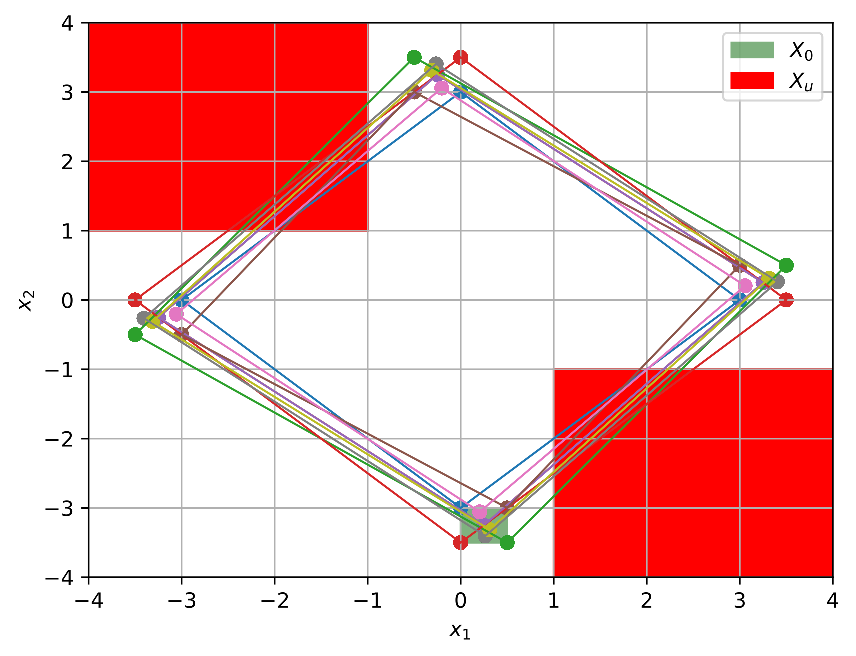, width=0.45\textwidth, keepaspectratio}
    \caption{Sampled state trajectories of simple dynamic model. The trajectory lines show jumps. The green and red shaded regions represent $\Xx_0$ and $\Xx_u$, respectively. Set $\Xx_u$ is never visited. Note that as we deal with discrete time systems, the states of the system jump and thus, the trajectories do not intersect with the unsafe region.}
    \label{fig:casestudy_safety1}
\end{figure}

\subsubsection{2D Van der Pol Oscillator}
\label{subsec:casestudy_safety2}

The model dynamics is given by the following difference equations:
\begin{align*}
&\begin{cases}
    x_1' = x_1 + Tx_2,\\
    x_2' = x_2 + T(-x_1 + u x_2(1-x_1^2)),
\end{cases}
\end{align*}
where $T = 0.1$ is the sampling time and $u = 0.4$.
The state set, initial set, and unsafe set are given by $\Xx = [-2.5,4]\times [-2.5,2.5]$, $\Xx_0 = [3,3.5]\times[1.5,2]$, and $\Xx_{u} = ([2,4]\times[-2.5,-1.5]) \cup ([-0.5,0.5]\times[-0.5,0.5]) \cup ([-2.5,-1.5]\times[1.5,2.5])$, respectively. 
We used $\eta_j$ as decision variables to minimize, with a lower bound of $0.001$ for all unsafe regions to search for the VCC.
We search for polynomials up to a maximum degree $degree_{max} = 6$ and a maximum number of $k_{max} = 3$ functions.
We were able to find a VCC of degree four of $k = 3$ functions with $A = \big(
\begin{smallmatrix}
  1 & 1 & 0\\
  1 & 0 & 0\\
  1 & 0 & 1
\end{smallmatrix}\big)
$ much faster than the nearest standard CC of degree five that we could find. 
See Table~\ref{tab:compare} for detailed comparisons.

Figure \ref{fig:casestudy_safety2} shows sampled runs of the dynamical system that confirm $\Xx_{u}$ is never visited.
\begin{figure}[!t]
    \centering
    \epsfig{file=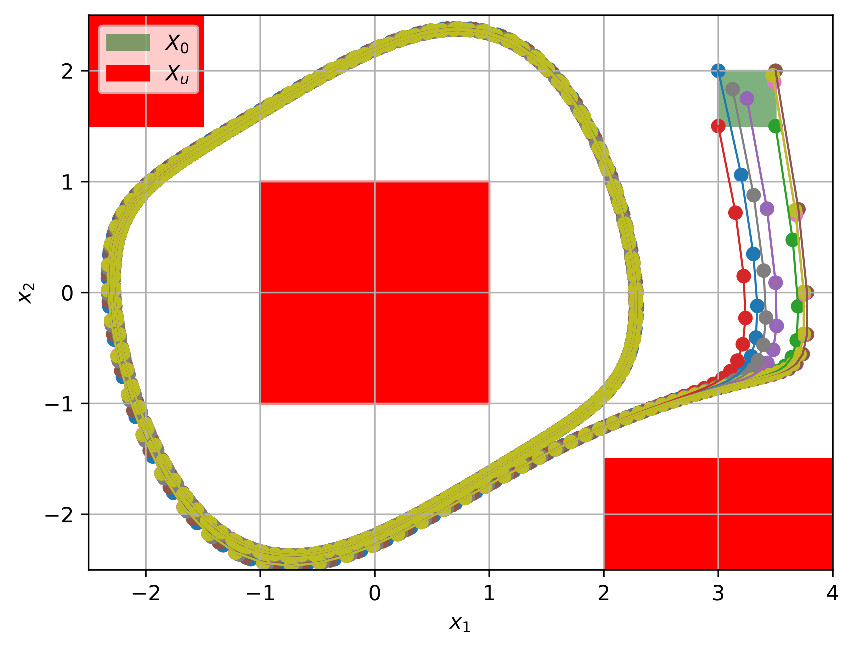, width=0.45\textwidth, keepaspectratio}
  \caption{Sampled state trajectories of Van der Pol oscillator model. The green and red shaded regions represent $\Xx_0$ and $\Xx_u$, respectively. Set $\Xx_u$ is never visited.}
  \label{fig:casestudy_safety2}
\end{figure}

\subsection{Persistence}
\label{subsec:casestudy_pers}

We experimentally demonstrate the utility of VCBRFs and VCCs to verify persistence in a 3D Kuramoto oscillator model with states $x = [x_1, x_2, x_3]^T$.

The dynamics of our model are given by the following difference equations:
\begin{align*}
&\begin{cases}
    x_1' = x_1 + TKsin(x_2-x_1) + cx_1^2 + (\Omega + 0.97),\\
    x_2' = x_2 + TK(sin(x_1-x_2) + sin(x_3-x_2)) + cx_2^2 \\
    \qquad + (\Omega + 0.97),\\
    x_3' = x_3 + TKsin(x_2-x_3) + cx_3^2 + (\Omega + 0.97),
\end{cases}
\end{align*}
where $T = 0.1s$ is the sampling time, $\Omega = 0.03$ is the natural frequency, $K = 0.06$ is the coupling strength, and $c = -0.532$. 
The state set, initial set, and finitely visited state set are given by $\Xx = [0,2]^3,\ \Xx_0 = [0,\frac{2\pi}{15}]^3$ and $\Xx_{VF} = ([0,0.7]^2\times[0,2]) \cup ([0,2]\times[1.45,2]^2)$, respectively. 
For both VCBRF and VCC search, we used the SOS optimization program as described in Section \ref{subsec:find_persistence}.
We set $\gamma_i = 1$ and $\rho_i = 1$ for VCC conditions. 
We used $sin(x) \approx x - \frac{x^3}{6}$ as the Taylor approximation for all the sine terms.
We defined $\eta_j$ as the decision variables to be minimized, with a lower bound of $0.001$.
For VCCs, we varied $k$ from $0$ to $k_{max} = 2$ and the degree of the polynomial template from $1$ to $degree_{max} = 4$, above which the SOS fails to compute due to the size of the SOS problem and device memory constraints. 
For VCBRFs, we set $k_{max} = 3$ and $degree_{max} = 8$.
We were unsuccessful in finding a standard closure certificate up to degree four. 
However, we found a VCBRF and VCC using a degree three polynomial template with $k = 2$ and $\eta_j = 0.001$. 
For VCBRF, we used $A_1 = \big(
\begin{smallmatrix}
  0 & 0\\
  1 & 0
\end{smallmatrix}\big)$, 
$A_2 = A_3 = \big(
\begin{smallmatrix}
  0 & 0\\
  0 & 0
\end{smallmatrix}\big)$.
For VCC, we used $A = \big(
\begin{smallmatrix}
  2 & 0\\
  0 & 1
\end{smallmatrix}\big)$. 
Observe that in the case of VCC here, the use of a diagonal matrix allows us to dedicate each function to each finitely visited region and obtain lower degree polynomial certificates.
See Table~\ref{tab:compare} for comparisons.

\subsection{LTL Specification}
\label{subsec:casestudy_ltl}

Here, we consider the following dynamics for a four state Lotka Volterra type prey-predator model:
\begin{align*}
&\begin{cases}
    x_1' = x_1 + T(r x_2 - (b_1+d_1) x_1 -h x_1 x_4)\\
    x_2' = x_2 + T(b_1 x_1 - b_2 x_2^2 - h x_2 x_4 - d_2 x_2)\\
    x_3' = x_3 + T(\alpha h x_1 x_4 + \beta h_2 x_2 x_4 - (n+d_3) x_3 -\xi x_2 x_4)\\
    x_4' = x_4 + T (n x_3 - d_4 x_4),
\end{cases}
\end{align*}
where $r = 1.6$, $b_1 = 0.3$, $b_2 = 0.3$, $h = 20$, $\alpha = 0.2$, $\beta = 0.2$, $\xi = 0.08$, $n = 0.3$, $d_1 = 0.08$, $d_2 = 0.06$, $d_3 = 0.5$, $d_4 = 0.5$ and $T = 0.01$.
The state set and initial set are given by $\Xx = [0, 7] \times [0, 6] \times [0, 8] \times [0, 4]$ and $\Xx_0 = [6.5,7] \times [5.5,6] \times [4.5,5] \times [3.5,4]$, respectively. 
We defined $\eta_j$ as the decision variables to be minimized, with a lower bound of $0.001$ for all accepting states.

Let the LTL specification to be verified be $\mathsf{F} \neg a \land \mathsf{F} \neg b \land \mathsf{FG} \neg c$.
This property requires that a system eventually visits the states labeled with atomic propositions $\neg a$ and $\neg b$ and finally stabilizes in the states labeled with the atomic proposition $\neg c$. The complement of this specification is denoted by an NBA $\Aa$ in Figure~\ref{fig:casestudy_ltl1}. 
We introduce an additional label: $\neg d$ for when the state is labeled with the complement of $\neg c$. Explicitly, the labeling map $\Ll : \Xx \rightarrow \Sigma$ is given by:
\[
\Ll(x) = 
\begin{cases}
    \neg a \quad \text{if } x \in [0,7] \times [0.5,2]\times [5.5,7.5]\times [0,4],\\
    \neg b \quad \text{if } x \in [0,7] \times [0,6]\times [1,3]\times [2,4],\\
    \neg c \quad \text{if } x \in [0,7] \times [0,6]\times [0,1]^2,\\
    \neg d \quad \text{otherwise}.
\end{cases}
\]
We utilize the appropriate LTL conditions expressed as an SOS optimization problem, as described in Section~\ref{subsec:find_ltl}.
We set $\gamma_i = 1$ and $\rho_i = 1$ for VCC conditions. 
Note that for VCCs, since we have three accepting states, when $k = 3$, we enforce condition~\eqref{eq:sos_vcc_ltl3} by dedicating each function to each accepting state. 
Otherwise, one of the functions will have the condition for more than one accepting state.
For VCBRFs, we searched for polynomials up to a maximum degree $degree_{max} = 7$ and a maximum number of functions $k_{max} = 3$.
We found a degree five VCBRF of $k = 2$ functions with $A_1 = A_2 = A_3 = \big(
\begin{smallmatrix}
  1 & 0\\
  1 & 1
\end{smallmatrix}\big)$.
We modified $degree_{max} = 4$ for VCC and found a degree three VCC of $k = 3$ functions with $A = \big(
\begin{smallmatrix}
  1 & 0 & 0\\
  1 & 1 & 0\\
  0 & 1 & 1
\end{smallmatrix}\big)
$. 
We could not search for CCs and VCCs above the degree $4$ due to the large size of the problem.
Table~\ref{tab:compare} shows comparisons of the results we obtained.

\begin{figure}[!t]
    \centering
    \begin{tikzpicture}[node distance =2cm]
    \node[initial, state, draw, initial text =,fill=blue!10!white] (1) at (0,0) {$q_1$};
    \node[accepting, state, fill = blue!10!white] (2) at (2,1.5) {$q_2$};
    \node[accepting, state, fill = blue!10!white] (3) at (2, 0) {$q_3$};
    \node[accepting, state, fill = blue!10!white] (4) at (2,-1.5) {$q_4$};
    \path[->]
    (1) edge[loop above] node[above]{$d$} (1)
    (1) edge[] node[above]{$a$} (2)
    (1) edge[] node[above]{$b$} (3)
    (1) edge[] node[above]{$c$} (4)
    (2) edge[loop right] node[right]{$a$} (2)
    (3) edge[loop right] node[right]{$b$} (3)
    (4) edge[bend left] node[left=0.1cm]{$d$} (1)
    (4) edge[loop right] node[right]{$c$} (4);   
    \end{tikzpicture}
    \caption{A (nondeterministic) B\"uchi automaton $\Aa$ for the LTL specification from Section~\ref{subsec:casestudy_ltl}. 
    The automata represents the LTL formula $\neg( \mathsf{F} \neg a \land \mathsf{F} \neg b \land \mathsf{FG} \neg c)$.}
    \label{fig:casestudy_ltl1}
\end{figure}

\begin{table}[!t]
\centering
\caption{Comparison of different certificates for the case studies.}
\begin{tabular}{ |p{0.21\linewidth}|p{0.27\linewidth}|p{0.17\linewidth}|p{0.2\linewidth}|  }
 \hline
System & Method & Polynomial Degree & Computation Times (s)\\
\hline
 \multirow{3}{*}{2D Simple}  & BC & $5$ &  $25.46$\\
 & CC & $5$ & $47.7$\\
 & VCC ($k = 2$) & $3$ & $18.75$\\
 \hline
 \multirow{3}{*}{Van der Pol}  & BC & $5$ &  $68.84$\\
 & CC & $5$ & $1263.30$\\
 & VCC ($k = 3$) & $4$ & $118.72$\\
 \hline
 \multirow{3}{*}{Kuramoto} & VCBRF ($k = 2$) & $3$ & $50.09$\\
 & CC & NF ($\leq 4$) & $-$\\
 & VCC ($k = 2$) & $3$ & $391.31$\\
 \hline
 \multirow{3}{*}{Prey-predator} & VCBRF ($k = 2$) & $5$ & $210.15$\\
 & CC & NF ($\leq 4$)& $-$\\
 & VCC ($k = 3$) & $3$ & $3002.51$\\
 \hline
\end{tabular}
{\raggedright \vspace{0.1em}* NF = Not Found \par}
\label{tab:compare}
\end{table}

\section{Conclusion}
We proposed two notions of vector certificates (VCBRFs and VCCs) that relax the standard conditions by finding multiple functions that together verify $\omega$-regular specifications of a given dynamical system.
We computed these certificates using SOS programming and demonstrated their potential computational advantages. 
Additionally, we present the benefit of VCCs over VCBRFs when the sets of interest are unions of multiple sets.
In future work, we plan to investigate how to effectively address bilinear conditions without having to experiment with the nonnegative matrix $A$ and potentially extend the formulation using a state-dependent nonnegative matrix $A(x,y)$.
Moreover, we intend to explore how one may use data-driven approaches to allow for a larger class of expressive functions to act as certificates.

\bibliographystyle{alpha}
\bibliography{ref.bib}

@inproceedings{prajna2004safety,
  title={Safety verification of hybrid systems using barrier certificates},
  author={Prajna, Stephen and Jadbabaie, Ali},
  booktitle={International Workshop on Hybrid Systems: Computation and Control},
  pages={477--492},
  year={2004},
publisher ={Springer}
}

@article{prajna2007convex,
author = {Prajna, Stephen and Rantzer, Anders},
year = {2007},
title = {Convex programs for temporal verification of nonlinear dynamical systems},
pages = {999-1021},
journal = {SIAM Journal on Control and Optimization},
}

@article{parrilo2003semidefinite,
  title={Semidefinite programming relaxations for semialgebraic problems},
  author={Parrilo, Pablo A},
  journal={Mathematical programming},
  volume={96},
  pages={293--320},
  year={2003},
  publisher={Springer}
}

@inproceedings{prajna2002sostools,
  title={Introducing SOSTOOLS: A general purpose sum of squares programming solver},
  author={Prajna, Stephen and Papachristodoulou, Antonis and Parrilo, Pablo A},
  booktitle={Proceedings of the 41st IEEE Conference on Decision and Control, 2002.},
  pages={741--746},
  year={2002},
publisher = {IEEE}
}

@article{vardi1994reasoning,
  title={Reasoning about infinite computations},
  author={Vardi, Moshe Y and Wolper, Pierre},
  journal={Information and computation},
  volume={115},
  number={1},
  pages={1--37},
  year={1994},
  publisher={Elsevier},
}

@article{moura2011smt,
  title={Satisfiability modulo theories: introduction and applications},
  author={De Moura, Leonardo and Bj{\o}rner, Nikolaj},
  journal={Communications of the ACM},
  volume={54},
  number={9},
  pages={69--77},
  year={2011},
  publisher={ACM New York, NY, USA}
}

@inproceedings{pnueli1977temporal,
  title={The temporal logic of programs},
  author={Pnueli, Amir},
  booktitle={18th Annual Symposium on Foundations of Computer Science},
  pages={46--57},
  year={1977},
  organization={IEEE}
}

@inproceedings{dimitrova2014deductive,
  title={Deductive control synthesis for alternating-time logics},
  author={Dimitrova, Rayna and Majumdar, Rupak},
  booktitle={2014 International Conference on Embedded Software (EMSOFT)},
  pages={1--10},
  year={2014},
  organization={IEEE}
}

@inproceedings{podelski2006model,
  title={Model checking of hybrid systems: From reachability towards stability},
  author={Podelski, Andreas and Wagner, Silke},
  booktitle={International Workshop on Hybrid Systems: Computation and Control},
  pages={507--521},
  year={2006},
  organization={Springer}
}

@inproceedings{podelski2004transition,
  title={Transition invariants},
  author={Podelski, Andreas and Rybalchenko, Andrey},
  booktitle={Proceedings of the 19th Annual IEEE Symposium on Logic in Computer Science, 2004.},
  pages={32--41},
  year={2004},
  organization={IEEE}
}

@article{vardi2005automata,
  title={An automata-theoretic approach to linear temporal logic},
  author={Vardi, Moshe Y},
  journal={Logics for concurrency: structure versus automata},
  pages={238--266},
  year={2005},
  publisher={Springer}
}

@inproceedings{nadali2024closure,
    author = {Nadali, Alireza and Murali, Vishnu and Trivedi, Ashutosh and Zamani, Majid} ,
    title = {Neural Closure Certificates},
    booktitle = {Proceedings of the AAAI Conference on Artificial Intelligence},
    year = {2024}
}

@inproceedings{murali2024closure,
  title={Closure certificates},
  author={Murali, Vishnu and Trivedi, Ashutosh and Zamani, Majid},
  booktitle={Proceedings of the 27th ACM International Conference on Hybrid Systems: Computation and Control},
  pages={1--11},
  year={2024}
}

@ARTICLE{oumer2024ibc,
  author={Oumer, Mohammed Adib and Murali, Vishnu and Trivedi, Ashutosh and Zamani, Majid},
  journal={IEEE Control Systems Letters}, 
  title={Safety Verification of Discrete-Time Systems via Interpolation-Inspired Barrier Certificates}, 
  year={2024},
  volume={8},
  number={},
  pages={3183-3188},
  doi={10.1109/LCSYS.2024.3521356}
}

@article{wang2021tssos,
  title={TSSOS: A moment-SOS hierarchy that exploits term sparsity},
  author={Wang, Jie and Magron, Victor and Lasserre, Jean-Bernard},
  journal={SIAM Journal on optimization},
  volume={31},
  number={1},
  pages={30--58},
  year={2021},
  publisher={SIAM}
}

@article{alpern1987recognizing,
  title={Recognizing safety and liveness},
  author={Alpern, Bowen and Schneider, Fred B},
  journal={Distributed computing},
  volume={2},
  number={3},
  pages={117--126},
  year={1987},
  publisher={Springer}
}

@article{yakubovich1971s,
  title={S-procedure in nolinear control theory},
  author={Yakubovich, Vladimir A},
  journal={Vestnik Leninggradskogo Universiteta, Ser. Matematika},
  pages={62--77},
  year={1971}
}

@inproceedings{sogokon2018vector,
  title={Vector barrier certificates and comparison systems},
  author={Sogokon, Andrew and Ghorbal, Khalil and Tan, Yong Kiam and Platzer, Andr{\'e}},
  booktitle={International Symposium on Formal Methods},
  pages={418--437},
  year={2018},
  organization={Springer}
}

@inproceedings{kong2013exponential,
  title={Exponential-condition-based barrier certificate generation for safety verification of hybrid systems},
  author={Kong, Hui and He, Fei and Song, Xiaoyu and Hung, William NN and Gu, Ming},
  booktitle={International Conference on Computer Aided Verification},
  pages={242--257},
  year={2013},
  organization={Springer}
}

@article{dai2017barrier,
  title={Barrier certificates revisited},
  author={Dai, Liyun and Gan, Ting and Xia, Bican and Zhan, Naijun},
  journal={Journal of Symbolic Computation},
  volume={80},
  pages={62--86},
  year={2017},
  publisher={Elsevier}
}

@inproceedings{berger2024cone,
  title={Cone-based abstract interpretation for nonlinear positive invariant synthesis},
  author={Berger, Guillaume and Ghanbarpour, Masoumeh and Sankaranarayanan, Sriram},
  booktitle={Proceedings of the 27th ACM International Conference on Hybrid Systems: Computation and Control},
  pages={1--16},
  year={2024}
}

@inproceedings{wang2021synthesizing,
  title={Synthesizing invariant barrier certificates via difference-of-convex programming},
  author={Wang, Qiuye and Chen, Mingshuai and Xue, Bai and Zhan, Naijun and Katoen, Joost-Pieter},
  booktitle={International Conference on Computer Aided Verification},
  pages={443--466},
  year={2021},
  organization={Springer}
}

@inproceedings{chatterjee2024sound,
  title={Sound and complete witnesses for template-based verification of LTL properties on polynomial programs},
  author={Chatterjee, Krishnendu and Goharshady, Amir and Goharshady, Ehsan and Karrabi, Mehrdad and {\v{Z}}ikeli{\'c}, {\DJ}or{\dj}e},
  booktitle={International Symposium on Formal Methods},
  pages={600--619},
  year={2024},
  organization={Springer}
}

@incollection{ramsey1987problem,
  title={On a problem of formal logic},
  author={Ramsey, Frank P},
  booktitle={Classic Papers in Combinatorics},
  pages={1--24},
  year={1987},
  publisher={Springer}
}

@misc{gurobi,
  author = {{Gurobi Optimization, LLC}},
  title = {{Gurobi Optimizer Reference Manual}},
  year = 2024,
  url = "https://www.gurobi.com"
}

\newpage
\appendix

\section{Case Study Results}
\label{appendix:casestudies}

VCC functions for verifying safety of 2D simple system.

\begin{align*}
    \Tt_1&(x,y) = -19.489x_1^{2}y_1 + 10.28x_1^{2}y_2 - 30.435x_1x_2y_1 - 32.771x_1x_2y_2 + 19.575x_2^{2}y_1 - 10.403x_2^{2}y_2 \\
    &- 0.015y_1^{3} + 0.149y_1^{2}y_2 - 0.155y_1y_2^{2} - 0.014y_2^{3} + 28.376x_1^{2} + 0.008x_1x_2 + 28.693x_2^{2} + 58.051y_1^{2}\\
    &+ 109.296y_1y_2 + 30.06y_2^{2} + 0.874y_1 + 0.039y_2 + 10.356,\\
    \Tt_2&(x,y) = 19.575x_1^{2}y_1 - 10.403x_1^{2}y_2 + 30.435x_1x_2y_1 + 32.771x_1x_2y_2 - 19.489x_2^{2}y_1 + 10.28x_2^{2}y_2 - 0.015y_1^{3}\\
    &+ 0.149y_1^{2}y_2 - 0.155y_1y_2^{2} - 0.014y_2^{3} + 28.693x_1^{2} - 0.008x_1x_2 + 28.376x_2^{2} + 58.051y_1^{2} + 109.296y_1y_2\\
    &+ 30.06y_2^{2} + 0.874y_1 + 0.039y_2 + 10.356.
\end{align*}

\begin{table}[H]
\centering
\caption{VCC function parameters for verifying safety of Van der Pol oscillator model.}

\label{tab:vcc_safety2}
\end{table}

\begin{table}[H]
\centering
\caption{VCBRF function parameters for verifying persistence of Kuramoto oscillator model.}
%
\label{tab:vcbrf_persistence}
\end{table}

\begin{table}[H]
\centering
\caption{VCC function parameters for verifying persistence of Kuramoto oscillator model.}
%
\label{tab:vcc_persistence}
\end{table}

\newpage
\centering
\topcaption{VCC function parameters for verifying an LTL specification of a prey-predator model.}
%

\newpage
\centering
\topcaption{VCC function parameters for verifying an LTL specification of a prey-predator model.}
%

\end{document}